\newif\ifEPTCS
\author{Luigi Santocanale
  \institute{LIS, CNRS UMR 7020, Aix-Marseille Universit\'e, France}
  \email{luigi.santocanale@lis-lab.fr}
}
\title{Dualizing sup-preserving endomaps of a complete lattice}
\begin{document}

\maketitle
\begin{abstract}
It is argued in \cite{EGHK2018} that the quantale $\EL$ of \jc
endomaps of a complete lattice $L$ is a \Gq exactly when $L$ is \cd.
We have argued in \cite{2020-RAMICS} that this \Gq structure arises
from the dual quantale 
of \mc endomaps of $L$ via \Rt{s} and extends to a \Gqoid structure on
the full subcategory of \SLatt (the category of \clatt{s} and \jc
maps) whose objects are the  \cdlatt{s}.

It is the goal of this talk to illustrate further this connection
between the quantale structure, \Rt{s}, and \cdity. \Rt{s} are indeed
\mix~maps in the \isomix category \SLatt and most of the theory can be
developed relying on naturality of these maps.
We complete then the remarks on cyclic elements of $\HLL$ developed in
\cite{2020-RAMICS} by investigating its dualizing elements. We argue
that if $\EL$ has the structure a \Fq, that is, if it has a dualizing
element, not necessarily a cyclic one, then $L$ is once more \cd. It
follows then from a general statement on \irl{s} that there is a
bijection between dualizing elements of $\EL$ and automorphisms of
$L$.  Finally, we also argue that if $L$ is finite and $\EL$ is
autodual, then $L$ is distributive.
\end{abstract}

\section{Lattice structure of the homsets in \SLatt}

\paragraph{The homset $\HXY$ in \SLatt.}
The category \SLatt of complete lattices and \jc functions is a
well-known \saut category
\cite{Barr1979,JoyalTierney,HiggsRowe1989,EGHK2018}. For \clatt{s}
$X,Y$, we denote by $\HXY$ the homset in this category. The
two-element Boolean algebra $\two$ is a dualizing element and
$\HomJ{X,\two}$ is, as a lattice, isomorphic to the dual lattice
$X^{op}$. More generally, the functor
$\fun{\ast} = \HomJ{\intfun,\two}$ is naturally isomorphic to the
functor $\FUN^{op}$ where, for $f : Y \rto X$,
$f^{op}: X^{op} \rto Y^{op}$ is the \radj of $f$, noted here by
$\ra{f}$ (the \ladj of an \mc function $g : Y \rto X$ shall be denoted
by $\la{g} : X \rto Y$).

Let us describe the internal structure of the homset $\HXY$ as a
\clatt.  For $x\in X$ and $y \in Y$, we define the following elements
of $\HXY$:
\begin{align*}
  c_{y}(t) & \eqdef
  \begin{cases}
    y\,, & t \neq \bot\,, \\
    \bot\,, & t = \bot\,,
  \end{cases}
  &
   a_{x}(t) & \eqdef
  \begin{cases}
    \top\,, & t \not\leq x\,,\\
    \bot\,, & t \leq x\,,
  \end{cases} \\
  (y \upTensor x)(t) & \eqdef
  \begin{cases}
    \top\,, & t \not\leq x\,, \\
    y\,, & \bot < t \leq x\,, \\
    \bot\,, & t = \bot\,,
  \end{cases}
  &
  e_{y,x}(t) & \eqdef
  \begin{cases}
    y\,, & t \not\leq x\,, \\
    \bot\,, & t \leq x\,.
  \end{cases}
\end{align*}
\begin{lemma}
  \label{lemma:tensors}
  For each $f \in \HXY$, $x \in X$, and $y \in Y$, $f(x) \leq y$ if
  and only if $f \leq y \upTensor x$. Consequently, for each
  $f \in \HXY$,
  \begin{align*}
    f & = \bigwedge \set{y \upTensor x \mid f(x) \leq y} =
    \bigwedge_{x \in X} f(x) \upTensor x\,,
  \end{align*}
  and the \jc functions of the form $y \upTensor x$ generate $\HXY$
  under arbitrary meets.
\end{lemma}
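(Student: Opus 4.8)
The plan is to establish the biconditional by a direct pointwise inspection of the three-region definition of $y \upTensor x$, and then to read the meet decomposition off from it. As a preliminary I would note that each $y \upTensor x$ is genuinely \jc, so that $f \leq y \upTensor x$ is an inequality in the lattice $\HXY$: the only case needing thought is a join $\bigvee S$ with some $s \in S$ satisfying $s \not\leq x$, where assigning $\top$ outside $\dset x$ forces $(y \upTensor x)(\bigvee S) = \top = \bigvee_{s} (y \upTensor x)(s)$; when all of $S$ lies in $\dset x$ the value is $y$ or $\bot$ according to whether $\bigvee S > \bot$, and this too is preserved. Since the $y \upTensor x$ are introduced as elements of $\HXY$, this may be treated as already granted.

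For the equivalence itself I would argue both implications pointwise, using only that a \jc map sends $\bot$ to $\bot$ and is monotone. For the backward direction, from $f \leq y \upTensor x$ I evaluate at $t = x$: since $(y \upTensor x)(x)$ is $y$ when $x > \bot$ and $\bot$ when $x = \bot$ (in which latter case $f(x) = f(\bot) = \bot \leq y$ regardless), I obtain $f(x) \leq y$. For the forward direction, assuming $f(x) \leq y$ I check $f(t) \leq (y \upTensor x)(t)$ region by region: it is trivial at $t = \bot$ and for $t \not\leq x$ (right-hand side $\top$), while for $\bot < t \leq x$ monotonicity gives $f(t) \leq f(x) \leq y = (y \upTensor x)(t)$.

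Finally I would deduce the decomposition. From the equivalence, $f(x) \leq y$ yields $f \leq y \upTensor x$, so (taking $y = f(x)$) $f$ is a lower bound of $A \eqdef \set{f(x) \upTensor x \mid x \in X}$ and indeed of the larger family $B \eqdef \set{y \upTensor x \mid f(x) \leq y}$. Writing $h \eqdef \bigwedge A$ and $h' \eqdef \bigwedge B$ for the meets in $\HXY$, this gives $f \leq h' \leq h$, the middle step because $A \subseteq B$. The reverse inequality $h \leq f$ is the one point requiring care, since meets in $\HXY$ are a priori not computed pointwise; the device is to test the single distinguished term. As the order of $\HXY$ is pointwise, $h \leq f(x) \upTensor x$ gives $h(t) \leq (f(x) \upTensor x)(t)$ for all $t$, and the choice $x = t$ yields $h(t) \leq (f(t) \upTensor t)(t) = f(t)$ for $t > \bot$, while $h(\bot) = \bot = f(\bot)$ because $h$ is \jc. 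Hence $h \leq f$, so $f = h' = h$, which is exactly the asserted double equality; generation under arbitrary meets is merely its restatement. I expect the only genuine obstacles to be the bookkeeping of the $t = \bot$ boundary case and the justification that this possibly non-pointwise lattice meet can be evaluated through the term $x = t$.
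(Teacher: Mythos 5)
Your proof is correct: the pointwise three-region check of the biconditional, and the device of bounding the (a priori non-pointwise) meet $h$ by specializing the inequality $h \leq f(x)\upTensor x$ at $x = t$, are exactly the routine verification the paper leaves implicit, since it states Lemma~\ref{lemma:tensors} without proof. No gaps.
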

The tensor notation arises from the canonical isomorphism
$\HXY \iso (Y^{op} \otimes X)^{op}$ of \saut categories. That is,
$\HXY$ is dual to the tensor product $Y^{op} \otimes X$ and the
functions $y \upTensor x$ correspond to elementary tensors
of $Y^{op} \otimes X$.
For $f \in \HXY$, $g \in \HomJ{Y,Z}$, and $h \in \HomJ{X,Z}$, let us
recall that there exists uniquely determined maps
$g \below h \in \HXY$ and $h \upon f \in \HomJ{Y,Z}$ satisfying
\begin{align*}
  g \circ f &\leq h \Tiff f \leq g \below h \Tiff g \leq h \upon f\,.
\end{align*}
The binary operations $\below$ and $\upon$ are known under several
names: they yield left and right Kan extensions and (often when
$X = Y = Z$) they are named residuals or division operations
\cite{GJKO} or right and left implication
\cite{Rosenthal1990,EGHK2018}.  With the division operations at hand,
let us list some elementary relations between the functions previously
defined:
\begin{lemma}
  The following relations hold:
  \begin{inparaenum}[(i)]
  \item $y \upTensor x = c_{y} \vee a_{x} = c_{y} \upon c_{x}  = a_{y}
    \below a_{x}$,
  \item $e_{y,x}  = c_{y} \land a_{x} = c_{y} \circ a_{x}$,
  \item $c_{y} = y \upTensor \top = c_{y} \circ a_{\bot}$,
  \item $a_{x} = \bot \upTensor x = c_{\top} \circ a_{x}$. 
  \end{inparaenum}
\end{lemma}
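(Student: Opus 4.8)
The plan is to verify each identity by direct computation, exploiting that all the functions involved take at most three ``levels'' determined by how $t$ sits relative to $x$: the top region $t \not\leq x$, the middle region $\bot < t \leq x$, and the bottom point $t = \bot$. I would first record two structural facts to organize the argument. Binary (indeed arbitrary) joins in $\HXY$ are computed pointwise, since a pointwise join of \jc maps is again \jc; this immediately handles the join formula $y \upTensor x = c_{y} \vee a_{x}$ in (i). Moreover, the composites appearing in (ii), (iii) and (iv) are ordinary compositions of maps, so they too are evaluated pointwise.

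For the pointwise identities I would simply run through the three regions. For $c_{y} \vee a_{x}$, at $t = \bot$ both summands vanish; for $\bot < t \leq x$ one gets $y \vee \bot = y$; and for $t \not\leq x$ one gets $y \vee \top = \top$, matching the definition of $y \upTensor x$ case by case. The composition identities are equally mechanical: evaluating $a_{x}$ first sends $t \not\leq x$ to $\top$ and $t \leq x$ to $\bot$, after which post-composing with $c_{y}$ (resp.\ $c_{\top}$) reads off $y$ (resp.\ $\top$) above $x$ and $\bot$ below, giving $c_{y} \circ a_{x} = e_{y,x}$ and $c_{\top} \circ a_{x} = a_{x}$; similarly $y\upTensor\top$ and $c_{y}\circ a_{\bot}$ both collapse to $c_{y}$, and $\bot \upTensor x$ collapses to $a_{x}$. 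These dispatch the pointwise parts of (ii), (iii), (iv).

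The identities involving the division operations, namely $y \upTensor x = c_{y} \upon c_{x} = a_{y} \below a_{x}$ in (i), I would derive not pointwise but from the defining adjunctions together with Lemma~\ref{lemma:tensors}. For $c_{y} \upon c_{x}$, any $g$ satisfies $g \leq c_{y} \upon c_{x}$ iff $g \circ c_{x} \leq c_{y}$; since $g$ preserves $\bot$, the composite $g \circ c_{x}$ equals $c_{g(x)}$, so the condition reduces to $g(x) \leq y$, which by Lemma~\ref{lemma:tensors} is exactly $g \leq y \upTensor x$; comparing the resulting principal downsets yields the equality. For $a_{y} \below a_{x}$, one has $f \leq a_{y} \below a_{x}$ iff $a_{y} \circ f \leq a_{x}$, and the latter says precisely that $f(t) \leq y$ for every $t \leq x$; by monotonicity of $f$ this is equivalent to $f(x) \leq y$, i.e.\ again $f \leq y \upTensor x$ by Lemma~\ref{lemma:tensors}.

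The one step requiring genuine care, and the main obstacle, is the meet formula $e_{y,x} = c_{y} \land a_{x}$ in (ii), because meets in $\HXY$ are \emph{not} computed pointwise in general (a pointwise meet of \jc maps need not be \jc). My plan here is to first compute the pointwise meet of $c_{y}$ and $a_{x}$ and observe that, region by region, it coincides with $e_{y,x}$, and then to check separately that $e_{y,x}$ is itself \jc. Once $e_{y,x}$ is known to be \jc, it is a lower bound of $c_{y}$ and $a_{x}$ in $\HXY$ that already realizes the pointwise meet, hence it is their greatest lower bound there, and the identity follows. Verifying that $e_{y,x}$ is \jc reduces to noting that $\bigvee S \not\leq x$ holds as soon as $s \not\leq x$ for some $s \in S$, so that $e_{y,x}$ commutes with arbitrary joins.
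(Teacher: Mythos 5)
Your proof is correct and follows essentially the same route as the paper: the join, meet, and composition identities are handled by (pointwise) direct verification, and the two division identities $y \upTensor x = c_{y} \upon c_{x} = a_{y} \below a_{x}$ are derived from the defining adjunctions of $\upon$ and $\below$ together with Lemma~\ref{lemma:tensors}, exactly as in the paper. The only difference is that you spell out the details the paper cites as well known, including the worthwhile observation that $e_{y,x}$ must be checked to be \jc before identifying the pointwise meet with the meet in $\HXY$.
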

\begin{proof}
  The relations $y \upTensor x = c_{y} \vee a_{x}$ and
  $e_{y,x} = c_{y} \land a_{x}$ are well known, see e.g. \cite{W1985},
  and $(iii)$ and $(iv)$ are immediate consequences of these
  relations.

  Let us focus on
  $y \upTensor x = c_{y} \upon c_{x} = a_{y} \below a_{x}$ and notice
  that, in order to make sense of these relations, we need to assume
  $c_{x} : X' \rto X$, $c_{y} : X' \rto Y$, $a_{x} : X \rto Y'$, and
  $a_{y} : Y \rto Y'$.
  Observe that $f(x) \leq y$ if and only if $f \circ c_{x} \leq c_{y}$
  and therefore, in view of Lemma~\ref{lemma:tensors} and of the
  definion of $\upon$, the relation
  $y \upTensor x = c_{y} \upon c_{x}$.
  Next, the condition $f \leq a_{y} \below a_{x}$ amounts to
  $a_{y} \circ f \leq a_{x}$, that is, for all $t \in X$, if
  $t \leq x$ then $f(t) \leq y$. Clearly this condition is equivalent
  to $f(x) \leq y$, and therefore to $f \leq y \upTensor x$.
  Finally, the relation $e_{y,x} = c_{y} \circ a_{x}$ is directly
  verified. However, let us observe the abuse of notation, since for
  the maps $a_{x}$ and $c_{y}$ to be composable we need to assume
  either $X = Y$ or $a_{x} : X \rto \two$ and $c_{y} : \two \rto Y$.
\end{proof}

\paragraph{\Mc functions as tensor product.} 
Let $\HMXY$ denote the poset of \mc functions from $X$ to $Y$, with
the pointwise ordering. Observe that, as a set, $\HMXY$ equals
$ \HomJ{X^{op},Y^{op}}$. Yet, as a poset or a lattice, the equality
$\HMXY = \HomJ{X^{op},Y^{op}}^{op}$ is the correct one. As a matter of
fact, we have $f \leq_{\HMXY} g$ iff $f(x) \leq_{Y} g(x)$, all
$x \in X$, iff $g(x) \leq_{Y^{op}} f(x)$, all $x \in X$, iff
$g \leq_{\HomJ{X^{op},Y^{op}}} f$. Using standard isomorphisms of
\saut categories, we have
\begin{align*}
  \HMXY & = \HomJ{X^{op},Y^{op}}^{op} \iso Y \tensor X^{op}\,.
\end{align*}
That is, the set of \mc functions from $X$ to $Y$ can be taken as a
concrete realization of the tensor product $Y \tensor X^{op}$. This
should not come as a surprise, since it is well-known that the set of
Galois connections from $X$ to $Y$---that is, pairs of functions
$(f : X \rto Y,g : Y \rto X)$ such that $y \leq f(x)$ \tiff
$x \leq g(y)$---realizes the tensor product $Y \otimes X$ in \SLatt,
see e.g. \cite{Shmuely1974,Nelson1976} or \cite[\S 2.1.2]{EGHK2018}. Such a pair of
functions is uniquely determined by its first element, which is an \mc
functions from $X^{op}$ to $Y$.

Notice now that
\begin{align}
  \label{eq:radj}
  \HXY^{op} & \iso Y^{op} \tensor X \iso X \tensor Y^{op} \iso \HMYX\,,
\end{align}
from which we derive the following principle:
\begin{fact}
  \label{fact:bijection}
  There is a bijection beweeen \jcf{s} from $\HXY^{op}$ to $\HXY$ and
  \jcf{s} from $\HMYX$ to $\HXY$.
\end{fact}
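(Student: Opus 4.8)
The plan is to deduce the bijection as a formal consequence of the chain of \clatt isomorphisms already recorded in \eqref{eq:radj}, whose composite is a single isomorphism
\[
  \Phi : \HXY^{op} \rto \HMYX\,.
\]
Concretely, I would identify $\Phi$ with the \radj assignment $f \mapsto \ra{f}$: each \jcf $f : X \rto Y$ has a \radj $\ra{f} : Y \rto X$, which is an \mcf and so lies in $\HMYX$, and the adjunction $f(x) \leq y \Tiff x \leq \ra{f}(y)$ shows at once that $f \mapsto \ra{f}$ reverses the pointwise order and is a bijection (with inverse $g \mapsto \la{g}$). Reversing the order means precisely that $\Phi$ is an isomorphism from $\HXY^{op}$ onto $\HMYX$. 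In the write-up I would rather just invoke \eqref{eq:radj} directly, since its three factors---the duality $\HXY \iso (Y^{op} \tensor X)^{op}$, the symmetry of $\tensor$ in \SLatt, and the identification $\HMYX \iso X \tensor Y^{op}$---have all been assembled above.

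Granting $\Phi$, the bijection is immediate by pre-composition. Since $\Phi$ and $\Phi^{-1}$ are isomorphisms of \clatt{s} they are in particular \jc, and a composite of \jcf{s} is again \jc; hence
\[
  g \longmapsto g \circ \Phi^{-1}
\]
carries a \jcf $g : \HXY^{op} \rto \HXY$ to a \jcf $\HMYX \rto \HXY$, while $h \mapsto h \circ \Phi$ carries a \jcf $h : \HMYX \rto \HXY$ to a \jcf $\HXY^{op} \rto \HXY$. Associativity of composition together with $\Phi \circ \Phi^{-1} = \mathrm{id}$ and $\Phi^{-1} \circ \Phi = \mathrm{id}$ makes these two assignments mutually inverse, which is exactly the claimed bijection.

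There is essentially no obstacle beyond \eqref{eq:radj} itself, and that isomorphism has already been obtained from standard properties of \SLatt. The single point I would take care to state is that an order isomorphism is automatically \jc in both directions---it and its inverse preserve arbitrary \joins---so that pre-composition by $\Phi^{\pm 1}$ really does land inside the two stated homsets of \jc maps rather than merely among order-preserving maps.
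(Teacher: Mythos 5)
Your proposal is correct and follows essentially the same route as the paper: the isomorphism $\HXY^{op} \iso \HMYX$ is realized by the \radj map $\rho$ (composing the factors of \eqref{eq:radj}), and the bijection on \jcf{s} into $\HXY$ is obtained by precomposition with $\rho$ and its inverse $\ell$. Your added remark that an order isomorphism is automatically \jc in both directions is exactly the point the paper leaves implicit.
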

The map yielding the isomorphism in equation \eqref{eq:radj} is
$\rho$, the operation of taking the \radj. The bijection stated in
Fact~\ref{fact:bijection} is therefore obtained by precomposing with
$\rho$.

We exploit now the work done for $\HXY$ to recap the structure of
$\HMXY$ as a tensor product. Consider the maps
\begin{align*}
  \gamma_{y}(t) & \eqdef
  \begin{cases}
    \top \,, & t = \top\,, \\
    y \,, & \text{otherwise}\,,
  \end{cases}
  &
  \alpha_{x}(t) & \eqdef
  \begin{cases}
    \top\,, & x \leq t,\\
    \bot\,, & \text{otherwise}\,,
  \end{cases}
  \\
  y \doTensor x(t) & =
  \begin{cases}
    \top\,, & t = \top\,,\\
    y \,, & x \leq t \,,\\
    \bot\,, & \text{otherwise}\,.
  \end{cases}
\end{align*}
By dualizing Lemma~\ref{lemma:tensors}, we observe that the relation
$y \doTensor x = \gamma_{y} \land \alpha_{x}$ holds, the maps
$y \doTensor x$ realize the elementary tensors of the (abstract)
tensor product $Y \otimes X^{op}$, $\HMXY$ is join-generated by these
maps, and every $g \in \HMXY$ can be canonically written as
$g = \bigvee_{x \in X} g(x)\doTensor x$.

\smallskip

Recall that a \bimorphism $\psi : Y \times X^{op} \rto Z$ is a
function that is \jc in each variable, separately. This in particular
means that \meets in $X$ are transformed into \joins in $Z$. The
universal property of $\HMXY$ as a tensor product can be therefore
stated as follows:
\begin{fact}
  Given a \bimorphism $\psi : Y \times X^{op} \rto Z$, there exists a
  unique \jc functions $\tilde{\psi} : \HMXY \rto Z$ such that
  $\tilde{\psi}(y \doTensor x) = \psi(y,x)$. For $g \in \HMXY$,
  $\tilde{\psi}(g)$ is defined by
  \begin{align*}
    \tilde{\psi}(g) & \eqdef \bigvee_{x \in L} \psi(g(x),x)\,.
  \end{align*}
\end{fact}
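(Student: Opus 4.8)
The plan is to read the formula $\tilde{\psi}(g) = \bigvee_{x\in X}\psi(g(x),x)$ off the canonical decomposition and then settle the two genuinely nontrivial points separately: that this rule is forced (uniqueness) and that it really defines a \jc map (existence). Uniqueness is the easy half. If $\tilde{\psi}$ is any \jc map with $\tilde{\psi}(y\doTensor x)=\psi(y,x)$, then applying it to the canonical decomposition $g=\bigvee_{x\in X}g(x)\doTensor x$ established above (by dualizing Lemma~\ref{lemma:tensors}) and using that $\tilde{\psi}$ preserves arbitrary suprema gives
\[
  \tilde{\psi}(g)=\bigvee_{x\in X}\tilde{\psi}(g(x)\doTensor x)=\bigvee_{x\in X}\psi(g(x),x)\,,
\]
so the value of $\tilde{\psi}$ on every $g$ is determined; this simultaneously forces the displayed formula and yields uniqueness.

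For existence I would take the formula $\tilde{\psi}(g)\eqdef\bigvee_{x\in X}\psi(g(x),x)$ as a definition. It is patently monotone, since $g\le h$ in the pointwise order together with monotonicity of $\psi$ in its first argument gives the inequality termwise. One then checks that it restricts correctly on elementary tensors: evaluating $\tilde{\psi}(y\doTensor x)=\bigvee_{t\in X}\psi\big((y\doTensor x)(t),t\big)$ by cases on $t$, the terms with $x\not\le t$ vanish because $\psi(\bot,\intfun)=\bot$, and the term $t=\top$ vanishes because $\psi(\intfun,\top)=\bot$ (both since a \jc map preserves the empty supremum, and $\top$ is the bottom of $X^{op}$); the surviving terms are the $\psi(y,t)$ with $x\le t<\top$, among which antitonicity of $\psi(y,\intfun)$ along $X$ makes the value at $t=x$ dominant, so the join collapses to $\psi(y,x)$, as required (the corner case $x=\top$ giving $\bot=\psi(y,\top)$ on both sides).

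The main obstacle is the remaining point, that $\tilde{\psi}$ is \jc, and the difficulty is precisely that joins in $\HMXY$ are not computed pointwise, so one cannot simply push $\tilde{\psi}$ through a pointwise description of $\bigvee_i g_i$. I would sidestep this by exhibiting a \radj to $\tilde{\psi}$, which by the standard criterion is equivalent to being \jc. For $x\in X$ let $\beta_x:Z\rto Y$ be the \radj of the \jc map $\psi(\intfun,x):Y\rto Z$, so that $\psi(y,x)\le z$ iff $y\le\beta_x(z)$, and define $r(z)\in\HMXY$ by $r(z)(x)\eqdef\beta_x(z)$. The crux is to check that $r(z)$ is indeed \mc: for a family $\set{x_j}$ in $X$, using that $\psi$ is \jc in its second ($X^{op}$) argument, i.e. $\psi(y,\bigwedge_j x_j)=\bigvee_j\psi(y,x_j)$, the condition $\psi(y,\bigwedge_j x_j)\le z$ is equivalent to $y\le\beta_{x_j}(z)$ for all $j$, whence $\beta_{\bigwedge_j x_j}(z)=\bigwedge_j\beta_{x_j}(z)$. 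Since $r(z)$ then lands in $\HMXY$, the adjunction $\tilde{\psi}\adj r$ is immediate: $\tilde{\psi}(g)\le z$ iff $\psi(g(x),x)\le z$ for all $x$ iff $g(x)\le\beta_x(z)=r(z)(x)$ for all $x$ iff $g\le r(z)$ in the pointwise order of $\HMXY$. Hence $\tilde{\psi}$ is \jc, completing existence.

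I would close by noting that existence and uniqueness of $\tilde{\psi}$ also follow abstractly from the universal property of the tensor product in the \saut category \SLatt together with the identification $\HMXY\iso Y\tensor X^{op}$ recalled above, the real content of the explicit statement being the pointwise formula. The adjoint argument has the advantage of being self-contained and of producing the formula directly, without any appeal to an external join description in $\HMXY$.
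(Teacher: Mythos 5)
Your proof is correct, and it takes a genuinely different route from the paper, which offers no detailed verification at all: there the Fact is presented as the universal property of the tensor product, obtained by transporting it along the identification $\HMXY \iso Y \tensor X^{op}$ (itself derived from standard isomorphisms of \saut categories), with the explicit formula read off the canonical decomposition $g = \bigvee_{x \in X} g(x) \doTensor x$. Your argument is instead elementary and self-contained. The uniqueness half and the computation of $\tilde{\psi}$ on the maps $y \doTensor x$ (including the degenerate cases $t=\top$ and $x=\top$) are exactly right. More importantly, you correctly isolate the one point that an unwary reader would get wrong: joins in $\HMXY$ are \emph{not} computed pointwise (the pointwise sup of \mc maps need not be \mc), so one cannot verify sup-preservation of $\tilde{\psi}$ by pushing it through a pointwise description of $\bigvee_i g_i$. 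Your way around this --- exhibiting the \radj $r(z)(x) = \beta_x(z)$, where $\beta_x$ is the \radj of $\psi(\intfun,x)$, and checking that $r(z)$ is \mc precisely because $\psi$ turns \meets in $X$ into \joins in $Z$ --- is clean and uses nothing beyond the adjoint functor criterion for \jc maps between \clatt{s}. What the paper's abstract approach buys is brevity and the conceptual identification of $\HMXY$ with $Y \tensor X^{op}$, which is then used throughout; what your approach buys is an independent, concrete proof that this set really does enjoy the universal property, rather than assuming it from the categorical framework. The two are complementary, and your closing remark acknowledges exactly this.
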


\section{Raney's transforms}

For $g \in \HMXY$ and $f \in \HXY$, define
\begin{align*}
  \joinof{g}(x) & \eqdef \bigvee_{x \not\leq t} g(t)\,,
  &
  \meetof{f}(x) & \eqdef \bigwedge_{t \not\leq x} f(t)\,.
\end{align*}
It is easily seen that $\joinof{g}$ has a right adjoint, so
$\joinof{g} \in \HXY$, and that $\meetof{f}$ has a \ladj, so
$\meetof{f}$ belongs to $\HMXY$.  We call the operations
$\joinof{\FUN}$ and $\meetof{\FUN}$ the \Rt{s}, even if Raney defined
these transforms on Galois connections. (In \cite{2020-RAMICS} we
explicitly related these maps to Raney's original way of defining
them). Notice that $\joinof{g} \leq f$ if and only if
$g \leq \meetof{f}$, so $\meetof{\FUN}$ is \radj to $\joinof{\FUN}$.
\Rt{s} have been the key ingredient allowing us to prove in \cite{CWO}
that $\HomJ{C,C}$ is a \Gq if $C$ is a complete chain and, lately in
\cite{2020-RAMICS}, that the full-subcategory of \SLatt whose objects
are the \cdlatt{s} is a \Gqoid.

\smallskip

Consider the \bimorphism $e : Y \times X^{op} \rto \HXY$ sending $y,x$
to $e_{y,x} = c_{y} \circ a_{x} \in \HXY$ and its extension
\begin{align*}
  \tilde{e}(f) & = \bigvee \set{c_{f(t)} \circ a_{t} \mid t \in X}\,.
\end{align*}
By evaluating $\tilde{e}(f)$ at $x \in L$,
we obtain
\begin{align*}
  \tilde{e}(f)(x) & = \bigvee \set{(c_{f(t)} \circ a_{t})(x) \mid t \in L}
  =
  \bigvee_{x \not\leq t} f(t) = \joinof{f}(x)\,,
\end{align*}
that is, $\tilde{e}(f) = \joinof{f}$.
Remark now that $e : Y \times X^{ op} \rto \HXY$ is the
(set-theoretic) transpose of the trimorphism
\begin{align*}
  \langle y,x,t\rangle & =
  \begin{cases}
    \bot\,, & t \leq x \\
    y \,, & \toth\,.
  \end{cases}
\end{align*}
Consequently, \Rt $\joinof{\FUN}$ is the transpose of the map
\begin{align}
  \label{eq:downTransposeMix}
  Y \otimes X^{op} \tensor X \rto[\iso] Y \otimes \HomJ{X,\two}
  \tensor  X \rto[Y \tensor \,eval\,] Y \otimes 2 \rto[\iso] Y \,.
\end{align}
In the category \SLatt,  
$\two$ is both the unit for the tensor product $Y\tensor X$ and its
dual $(Y^{op} \tensor X^{op})^{op} \iso \HomJ{X,Y^{op}}$.  \saut
categories with this property are examples of \isomix~categories in
sense of \cite{BCST1996,CS1997,BCS2000} where the transpose of the map
in \eqref{eq:downTransposeMix} is named $\mix$.  That \Rt{s} are
$\mix$ maps was recognized in \cite{HiggsRowe1989} where also the
nuclear objects---i.e. those objects whose \mix~maps are
invertible---in the category $\SLatt$ were characterized (using \RT)
as the \cd lattices. The importance of this characterization stems
from the fact that the nucleus of a \SMC category---that is, the full
subcategory of nuclear objects---yields a \radj to the forgetful
functor from the category of compact closed categories to that of \SMC
ones, as mentioned in \cite{Rowe1988}. In particular, the full
subcategory of \SLatt whose objects are the \cdlatt{s} is more than
\SMC or \saut, it is compact closed \cite{KellyLaplaza1980}.

A key property of \Rt{s}, importantly used in \cite{CWO,2020-RAMICS},
is the following. For $g \in \HMXY$ and $f \in \HXY$, the relations
\begin{align}
  \label{eq:commRadjLadj}
 \ra{\joinof{g}} & = \meetof{\la{g}}\,, & \la{\meetof{f}} & = \joinof{\ra{f}}\,,  
\end{align}
hold.  This property might be directly verified, as we did in
\cite{CWO,2020-RAMICS}. It might also be inferred from the
commutativity of each square in the diagram below:
\begin{center}
  \begin{tikzcd}
    \HMXY\ar[rrrr,"\joinof{\FUN}_{X,Y}"] \ar[rrd,equal] \ar[ddd,"\ell"']
    &&& &\HXY
    \ar[ddd,"\rho"]\\
    & &Y \tensor X^{op} \ar[d,"\sigma"]\ar[rru,"\mix_{Y,X}"']&& \\
    & & X^{op} \tensor Y
    \ar[rrd,"\mix_{X^{op},Y^{op}}"]&&\\
    \HomM{Y^{op},X^{op}} \ar[rru,equal]\ar[rrrr,"\joinof{\FUN}_{Y^{op},X^{op}}"]
    \ar[dd,equal]
    &&& & \HomJ{Y^{op},X^{op}}  \ar[dd,equal]\\
    &&&&\\
    \HomJ{Y,X}^{op}\ar[rrrr,"\meetof{\FUN}_{Y,X}"]&&&& \HomM{Y,X}^{op}
  \end{tikzcd}
\end{center}

\paragraph{Naturality of Raney's transforms.}
Observe now that, for $g : X' \rto X$ and $f : Y \rto Y'$, we have
\begin{align*}
  f \circ c_{y} & = c_{f(y)} \,, & a_{x} \circ g & = a_{\ra{g}(x)}\,,
  \quad\tand\quad
  f \circ e_{y,x} \circ g  =
  e_{f(y),\ra{g}(x)}\,,
\end{align*}
implying that the following diagram commutes:
\begin{center}
  \begin{tikzcd}
    Y \tensor X^{op} \ar[d,"f \tensor g^{op}"]\ar[rr,"\joinof{\FUN}_{Y,X}"]&& \HXY \ar[d,"\HomJ{g,f}"]\\
    Y' \tensor X'{}^{op} \ar[rr,"\joinof{\FUN}_{Y',X'}"]& &\HomJ{X',Y'}
  \end{tikzcd}
\end{center}
That is, Raney's transform $\joinof{\FUN}$ is natural in both its
variables.
Let us remark on the way the following:
\begin{proposition}
  There are exactly two natural arrows from $Y \tensor X^{op}$ to
  $\HXY$, the trivial one and Raney's transform.
\end{proposition}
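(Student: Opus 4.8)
The plan is to treat a natural arrow as a family of \jc maps $\eta_{X,Y} : Y \tensor X^{op} \rto \HXY$, natural in $X$ and $Y$ with respect to the same functoriality ($\HomJ{g,f}$ on the target, $f \tensor g^{op}$ on the source) for which \Rt was just shown to be natural. The whole family will be pinned down by the single value $u \eqdef \eta_{\two,\two}(\top \doTensor \bot) \in \HomJ{\two,\two}$. Since the elementary tensors $y \doTensor x$ join-generate $Y \tensor X^{op}$ and each $\eta_{X,Y}$ preserves \joins, it suffices to determine $\eta_{X,Y}(y \doTensor x)$ for all $x \in X$ and $y \in Y$; the remaining values are then forced.

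First I would carry out the reduction of an arbitrary generator to the universal one. The key observation is that $y \doTensor x$ is the image of $\top \doTensor \bot$ under the tensored morphism $c_{y} \tensor a_{x}^{op}$: on elementary tensors $f \tensor g^{op}$ acts by $y' \doTensor x' \mapsto f(y') \doTensor \ra{g}(x')$, while $c_{y}(\top) = y$ and $\ra{a_{x}}(\bot) = \bigvee \set{t \mid a_{x}(t) = \bot} = \bigvee \set{t \mid t \leq x} = x$. Applying naturality to the morphisms $a_{x} : X \rto \two$ and $c_{y} : \two \rto Y$ and evaluating at $\top \doTensor \bot$ then yields $\eta_{X,Y}(y \doTensor x) = \HomJ{a_{x}, c_{y}}(u) = c_{y} \circ u \circ a_{x}$, so that every component of $\eta$ is expressed through $u$ alone.

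It then remains to examine the two elements of $\HomJ{\two,\two} \iso \two$. If $u$ is the bottom element, the constant map with value $\bot$, then $c_{y} \circ u \circ a_{x} = \bot$ for all $x,y$, whence $\eta_{X,Y}$ is the constant $\bot$-valued map and $\eta$ is the trivial arrow. If $u$ is the identity $\mathrm{id}_{\two}$, then $c_{y} \circ u \circ a_{x} = c_{y} \circ a_{x} = e_{y,x} = \joinof{\FUN}(y \doTensor x)$, so $\eta$ agrees with \Rt on all generators and hence equals it. Conversely, both assignments are genuinely natural --- the trivial one because pre- and post-composition with \jc maps preserve $\bot$, and \Rt by the naturality already established --- and they are distinct, as witnessed already at $X = Y = \two$. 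This gives exactly two natural arrows.

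The step I expect to be the main obstacle is the reduction in the second paragraph: making precise that $\top \doTensor \bot$ is universal among generators, which rests on the explicit action of $f \tensor g^{op}$ on elementary tensors and on the computation of the \radj $\ra{a_{x}}$. Once this identity is in place the argument is immediate, since naturality transports the single value $u$ to every component and the poset $\HomJ{\two,\two}$ offers only two candidates.
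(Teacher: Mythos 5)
Your proof is correct and follows essentially the same route as the paper's: naturality is used to write $\eta_{X,Y}(y \doTensor x) = c_{y} \circ u \circ a_{x}$ for a single universal value $u = \psi(\top,\bot)$, and the two possible outcomes are read off from a case analysis on $u$. The only (harmless) difference is that you anchor the universal value in $\HomJ{\two,\two}$, which makes the case split immediate, whereas the paper keeps $f = \psi(\top,\bot)$ in $\HXY$ and needs the small extra observations $f \circ a_{x} = c_{f(\top)} \circ a_{x}$ and $c_{y} \circ c_{z} = c_{y}$ for $z \neq \bot$.
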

In order to simplify reading, we use $\psi$ both for a \bimorphism
$\psi : Y \times X^{op} \rto Z$ and for its extension to the tensor
product $\tilde{\psi} : Y \tensor X^{ op} \rto Z$.
\begin{proof}
  If $\psi$ is natural, then
  \begin{align*}
    \psi(y,x) & = \psi(c_{y}(\top),\gamma_{x}(\bot))
    = c_{y} \circ \psi(\top,\bot) \circ a_{x}\,,
  \end{align*}
  since $\ra{a_{x}} = \gamma_{x}$.  Let $f = \psi(\top,\bot)$.  If
  $f = \bot$, then $\psi$ is the trivial map.  Otherwise,
  $f \neq c_{\bot}$ and $f(\top) \neq \bot$.  Then, observing that
  $f \circ a_{x} = c_{f(\top)} \circ a_{x}$ and that
  $c_{y} \circ c_{z} = c_{y}$ for $z \neq \bot$, it follows that
  \begin{align*}
    \psi(y,x) & = c_{y} \circ f \circ a_{x} = c_{y} \circ c_{f(\top)}
    \circ a_{x} = c_{y} \circ a_{x}\,.
    \tag*{\qedhere}
  \end{align*}
\end{proof}
\begin{remark}
  Similar considerations can be developed if naturality is required in
  just one variable. For example, if the \bimorphism
  $\psi : Y \times X^{op} \rto \HXY$ is such that
  $\psi(y,x) \circ g = \psi(y, \ra{g}(x))$, then
  $\psi(y,x) = \chi(y) \circ a_{x}$ for some
  $\chi : Y \rto \HomJ{X,Y}$.
  \EndOfRemark
\end{remark}

For $f : X \rto Y$ in the category $\SL$, let $j = \ra{f} \circ f$ and
$o = f \circ \ra{f}$. Denote by $X_{j}$ (resp. $Y_{o}$) the set of
fixed points of $j$ (resp., of $o$). Then, we have a standard
(epi,iso,mono)-factorization
\begin{center}
  \begin{tikzcd}
    X \ar[r,"f"] \ar[d,"j"] & Y \\
    X_{j} \ar[r,"\simeq"] & Y_{o} \ar[u,hookrightarrow]
  \end{tikzcd}
\end{center}
Thus, $f$ is mono if and only $j = id_{X}$ and $f$ is epic if and only
if $o = id_{Y}$.
Notice that $Y_{o}$ is the image of $X$ under $f$, while $X_{j}$ is
the image of $Y$ under $\ra{f}$.
We apply this factorization to \Rt{s}.
\begin{definition}
  A \jcf $f : X \rto Y$ is \emph{tight} if $\jm{f} = f$, or,
  equivalently, if it belongs to the image of $\HMXY$ via the Raney's
  transform $\joinof{\FUN}$. We let $\HRXY$ be the set of tight
  functions from $X$ to $Y$.
\end{definition}

By its definition, $\HRXY$ is the sub-join-semilattice of $\HXY$
generated by the $c_{y} \circ a_{x}$. Moreover, it is easily seen that
$w \upTensor z \in \HRXY$, for each $w \in Y$ and $z \in X$, and that
$c_{y} \circ a_{x} \leq w \upTensor z$ if and only of $y \leq w$ or
$z \leq x$. From these relations, $\HRXY$ yields a concrete
representation of Wille's tensor product $Y \widehat{\tensor} X^{op}$,
see \cite{W1985}, which, for finite lattices, coincides with the Box
tensor product of \cite{GW1999}.

\medskip

Next, we list some immediate consequences of naturality of \Rt{s}:
\begin{proposition}
  \label{prop:ConsNat}
  The following statements hold:
  \begin{enumerate}[(i)]
  \item $\HRXY$ is a bi-ideal of $\HXY$.
  \item For a \clatt $L$, the transform
    $\funJoinOf : \HomM{L,L} \rto \HomJ{L,L}$ is surjective if and
    only if $id_{L}\in \HomR{L,L}$, that is, if $id = \jm{id}$.
  \item For each \clatt $L$, the pair $(\HomR{L,L},\circ)$ is a
    quantale.
\end{enumerate}
\end{proposition}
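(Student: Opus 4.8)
The plan is to obtain all three items from two facts already in hand: the naturality of $\joinof{\FUN}$ established just above, and the fact that $\joinof{\FUN}$ is a \ladj, hence preserves arbitrary \joins.

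For (i) I would first record that the image of a \ladj between \clatt{s} is closed under arbitrary \joins: writing $f_{i} = \joinof{\meetof{f_{i}}}$ for tight maps, we get $\bigvee_{i} f_{i} = \joinof{(\bigvee_{i}\meetof{f_{i}})}$, again in the image, and the empty \join $\bot = \joinof{\bot}$ is tight; so $\HRXY$ is closed under the \joins of $\HXY$. The two-sided closure is read directly off the naturality square: for $g : X' \rto X$ and $f : Y \rto Y'$ in $\SL$, the functorial action $\HomJ{g,f}$ (post-composition by $f$, pre-composition by $g$) carries $\joinof{\FUN}_{Y,X}(\psi)$ to $\joinof{\FUN}_{Y',X'}\big((f \tensor g^{op})(\psi)\big)$, hence sends tight maps to tight maps. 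Being closed under \joins and under pre- and post-composition with arbitrary morphisms of $\SL$ is exactly the bi-ideal property.

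For (ii), surjectivity of $\funJoinOf$ means $\HRLL = \HLL$. The forward direction is immediate, since then $id_{L}$ lies in the image, i.e.\ $id = \jm{id}$. For the converse, assume $id \in \HRLL$; by (i) $\HRLL$ is a bi-ideal, so for every $f \in \HLL$ we have $f = id \circ f \circ id \in \HRLL$, whence $\HRLL = \HLL$ and $\funJoinOf$ is onto.

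For (iii) the quantale structure assembles from the previous items. Closure of $\circ$ on $\HRLL$ is a special case of the bi-ideal property: for $f,g \in \HRLL$ the composite $f \circ g$ is again tight. Associativity of $\circ$ is inherited from $\HLL$; and since $\HRLL$ is closed under the \joins of $\HLL$, its \joins coincide with those of $\HLL$, so the distributivity of $\circ$ over arbitrary \joins on both sides—valid in $\SL$—restricts verbatim. Together with the bottom element this makes $\HRLL$ a complete lattice carrying an associative, \join-distributive product, i.e.\ a quantale. The only point requiring care is precisely this bookkeeping in (iii): one must check that the \joins witnessing distributivity are the ambient \joins of $\HLL$ rather than some \joins internal to $\HRLL$, which is why the \join-closure of (i) is needed. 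It is also why no unit is asserted—the evident candidate $id_{L}$ belongs to $\HRLL$ exactly when $\funJoinOf$ is surjective by (ii), so $\HRLL$ is to be read as a (generally non-unital) quantale.
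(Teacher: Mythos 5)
Your proof is correct and follows essentially the same route as the paper: the bi-ideal property of $\HRXY$ is read off the naturality square for $\joinof{\FUN}$, item (ii) follows from (i) via $f = f \circ id$, and (iii) is the routine verification that a join-closed bi-ideal inherits the quantale structure. Your explicit check that $\HRXY$ is closed under the ambient \joins (because $\joinof{\FUN}$ is a \ladj) is a detail the paper leaves implicit but is indeed what makes (iii) go through.
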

\begin{proofExtended}
  \begin{inparaenum}[(i)]
  \item 
  Recall that $h$ belongs to the image of $M \tensor L^{op}$ if and
  only if $h = \jm{h}$. Thus, if $h$ has this property and
  $f : M \rto M$, then by naturality
  $f \circ h = f \circ (\jm{h}) = \HomJ{L,f}(\jm{h}) = \joinof{((f
    \tensor L^{op})(\meetof{h}))}$, showing that $ f \circ h$ belongs
  to the image as well.
  Similarly, if $g : L \rto L$, then
  $h \circ g= (\jm{h}) \circ g = \HomJ{g,M}(\jm{h}) = \joinof{((M
    \tensor g^{op})(\meetof{h}))}$.  
\item 
  If $id_{L}\in \HomR{L,L}$, then, for each $f \in \Hom{L,L}$,
  $f = f \circ id \in \HomR{L,L}$, since $\HomR{L,L}$ is a bi-ideal.
  \end{inparaenum}
\end{proofExtended}
Let us recall that \RT \cite{Raney60} characterizes \cdlatt{s} as
those complete lattices satisfying the identity
\begin{align*}
  z & = \bigvee_{z \not\leq x} \bigwedge_{y \not \leq x} y \,.
\end{align*}
This identity is exactly the identity $id = \jm{id}$ or, as we have
seen in Proposition~\ref{prop:ConsNat}, the identity $f = \jm{f}$
holding for each $f \in \HLL$. Since \cdity is autodual (at least in a
classical context), we derive that \Rt
$\funJoinOf : \HomM{L,L} \rto \HomJ{L,L}$ is surjective if and only if
it is injective.

\medskip

We conclude this section with a glance at the quantale $(\HRLL,\circ)$
of tight maps, where $L$ is an arbitrary \clatt $L$. We pause before
for a technical lemma needed end the section and later on as well.
Recalling the equations in \eqref{eq:commRadjLadj}, let us define
\begin{align*}
  \Star{f} & \eqdef \la{\meetof{f}} \quad (\,= \joinof{\ra{f}} \,)\,,
\end{align*}
and observe the following:
\begin{lemma}
  \label{lemma:relationDual}
  For each $x \in X$, $y \in Y$, and $f \in \HXY$, the following
  conditions are equivalent:
  \begin{inparaenum}[(i)]
  \item for all $t \in X$, $x \leq t$ or $y \leq f(t)$,
  \item $c_{y} \circ a_{x} \leq f$
  \item $y \doTensor x \leq \meetof{f}$
  \item 
    $y \leq \meetof{f}(x)$
  \item $\Starf{y} \leq x$
  \item $\Star{f} \leq x \upTensor y$.
  \end{inparaenum}
\end{lemma}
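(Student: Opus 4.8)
The plan is to use condition (iv), $y \leq \meetof{f}(x)$, as a central hub and to show that each of the other five conditions is equivalent to it by a single unfolding step, relying either on the explicit definitions of the generating maps or on Lemma~\ref{lemma:tensors} and its dual. Since $\meetof{f}(x) = \bigwedge_{t \not\leq x} f(t)$, condition (iv) says precisely that $y \leq f(t)$ for every $t$ with $t \not\leq x$; I will use this pointwise reformulation as the common pivot.

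For the equivalence of (i) and (ii), I would recall from the elementary relations that $c_{y} \circ a_{x} = e_{y,x}$, so that (ii) reads $e_{y,x} \leq f$. Reading off the two values of $e_{y,x}$ (namely $y$ when $t \not\leq x$ and $\bot$ otherwise), the inequality $e_{y,x} \leq f$ holds exactly when the pointwise disjunction in (i) holds, which is also the pointwise form of (iv). This settles (i) $\Leftrightarrow$ (ii) $\Leftrightarrow$ (iv) at once.

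For (iii) $\Leftrightarrow$ (iv) I would invoke the dual of Lemma~\ref{lemma:tensors}: applied to $\meetof{f} \in \HMXY$, which is \mc, it states that $y \doTensor x \leq \meetof{f} \Tiff y \leq \meetof{f}(x)$, exactly as required. The equivalence (iv) $\Leftrightarrow$ (v) is then immediate from the definition $\Star{f} = \la{\meetof{f}}$: taking the left adjoint yields the adjunction $\Starf{y} \leq x \Tiff y \leq \meetof{f}(x)$, so (v) and (iv) coincide. Finally, for (v) $\Leftrightarrow$ (vi) I would apply Lemma~\ref{lemma:tensors} itself, now in the homset $\HomJ{Y,X}$ to which the \jc map $\Star{f}$ belongs: it gives $\Star{f} \leq x \upTensor y \Tiff \Starf{y} \leq x$, that is, (vi) iff (v).

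The individual arguments are all one-liners, so there is no genuine analytic obstacle; the only point demanding care is variance bookkeeping. I must keep straight that $\meetof{f}$ runs $X \rto Y$ and is \mc, that $\Star{f} = \la{\meetof{f}}$ runs $Y \rto X$ and is \jc, and correspondingly that the generating map in (iii) is $y \doTensor x$ (an elementary tensor of $Y \tensor X^{op}$, tested against an \mc map) whereas the one in (vi) is $x \upTensor y$ with the roles of $X$ and $Y$ swapped (an elementary tensor tested against the \jc map $\Star{f} : Y \rto X$). Matching each condition to the correct form of Lemma~\ref{lemma:tensors} (primal versus dual) and to the right direction of each map is the whole of the work.
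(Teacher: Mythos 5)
Your proposal is correct and follows essentially the same decomposition as the paper's proof: the links (iii)$\Leftrightarrow$(iv), (iv)$\Leftrightarrow$(v), and (v)$\Leftrightarrow$(vi) are handled with exactly the same tools (the dual of Lemma~\ref{lemma:tensors}, the adjunction $\Star{f} \adj \meetof{f}$, and Lemma~\ref{lemma:tensors} applied in $\HomJ{Y,X}$), the only cosmetic difference being that you attach (i) and (ii) to the chain by pointwise evaluation of $e_{y,x}$ against $\meetof{f}(x) = \bigwedge_{t \not\leq x} f(t)$, where the paper instead routes (ii)$\Leftrightarrow$(iii) through the identity $c_{y} \circ a_{x} = \joinof{(y \doTensor x)}$ and the adjunction $\joinof{\FUN} \adj \meetof{\FUN}$. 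One caveat worth flagging: condition (i) as printed reads ``$x \leq t$ or $y \leq f(t)$'', whereas your pointwise unfolding (and all the other conditions) yields ``$t \leq x$ or $y \leq f(t)$''; this is evidently a typo in the statement of the lemma, and your implicit reading is the intended one.
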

\begin{proof}
  \begin{inparadesc}
  \item[\stepEquiv{i}{ii}] direct verification.
  \item[\stepEquiv{ii}{iii}] since
    $c_{y} \circ a_{x} = \joinof{(y \doTensor x)}$,
    $c_{y} \circ a_{x} \leq \meetof{f}$ and by the adjunction
    $\joinof{\FUN} \adj \meetof{\FUN}$.
  \item[\stepEquiv{iii}{iv}] by the dual of Lemma~\ref{lemma:tensors}.
  \item[\stepEquiv{iv}{v}] since $\Star{f} \adj \meetof{f}$.
  \item[\stepEquiv{iv}{v}] by Lemma~\ref{lemma:tensors}.
  \end{inparadesc}
\end{proof}

\begin{proposition}
  \label{prop:QTightNotUnital}
  Unless $L$ is a \cdlatt (in which case $\HomR{L,L} = \Hom{L,L}$),
  the quantale $(\HomR{L,L},\circ)$ is not unital.
\end{proposition}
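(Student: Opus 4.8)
The plan is to show that any putative unit of the quantale $(\HomR{L,L},\circ)$ must coincide, as a function, with $id_{L}$, and then to invoke Proposition~\ref{prop:ConsNat} together with \RT to conclude that $id_{L}$ belongs to $\HomR{L,L}$ precisely when $L$ is \cd. The observation driving the argument is that the constant maps $c_{y}$ are themselves tight: indeed $c_{y} = y \upTensor \top$, and every $w \upTensor z$ lies in $\HomR{L,L}$, so the $c_{y}$ are available as test elements inside the quantale.

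First I would record how an arbitrary $u \in \HomJ{L,L}$ composes with a constant map on the left. Since $u(\bot) = \bot$ and $u(c_{y}(t)) = u(y)$ for every $t \neq \bot$, one obtains the identity $u \circ c_{y} = c_{u(y)}$. Now suppose $u \in \HomR{L,L}$ were a unit of $(\HomR{L,L},\circ)$. In particular $u$ is a left unit, so $u \circ c_{y} = c_{y}$ for each $y \in L$; combined with the previous identity this gives $c_{u(y)} = c_{y}$. Evaluating at $\top$ and using $c_{y}(\top) = y$ shows that two constant maps agree only when their values agree, whence $u(y) = y$. As $y$ ranges over $L$ this forces $u = id_{L}$.

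It then remains to connect this conclusion with \cdity. Because $u$ was assumed to be an element of the quantale, $u = id_{L}$ means $id_{L} \in \HomR{L,L}$, that is $id = \jm{id}$; by Proposition~\ref{prop:ConsNat}(ii) this is equivalent to surjectivity of the Raney transform $\funJoinOf$, which by \RT holds exactly when $L$ is \cd. Hence, if $L$ is not \cd, no unit can exist. Conversely, when $L$ is \cd the same equivalence makes $\funJoinOf$ surjective, so $\HomR{L,L} = \Hom{L,L}$ and $id_{L}$ is a genuine two-sided unit, which accounts for the parenthetical clause of the statement.

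I do not anticipate a serious obstacle: the whole argument rests on the single identity $u \circ c_{y} = c_{u(y)}$ and on the fact that the constant maps are tight. The only point needing a touch of care is the remark that it suffices to test the left-unit law against the $c_{y}$ alone, rather than against all of $\HomR{L,L}$; this is immediate once one notices that the family $\set{c_{y} \mid y \in L}$ already determines $u$ pointwise.
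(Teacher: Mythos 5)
Your proof is correct, and it takes a more direct route than the paper's. Both arguments ultimately test the unit law against the constant maps, but they diverge after that: you observe that $c_{y}=c_{y}\circ a_{\bot}$ is tight, use only the left-unit law $u \circ c_{y} = c_{y}$ together with the identity $u \circ c_{y} = c_{u(y)}$ to conclude that $u = id_{L}$ pointwise, and then finish via the definitional equivalence between membership of $id_{L}$ in $\HomR{L,L}$ and the identity $id = \jm{id}$, i.e.\ \RT. The paper instead writes $u = \bigvee_{i} c_{y_{i}} \circ a_{x_{i}}$, first uses the right-unit law on the maps $a_{x}$ to establish $u \leq id_{L}$, and only then uses the left-unit law on the $c_{y}$ to derive the Raney identity $y \leq \jm{id}(y)$ explicitly from the data $(y_{i},x_{i})$; it never appeals to the abstract characterization of tightness of $id_{L}$ itself. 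What your approach buys is brevity: a single composition identity plus the tightness of the $c_{y}$ does all the work, and the entire first half of the paper's computation (the bound $u \leq id_{L}$) becomes unnecessary. What the paper's version buys is a self-contained re-derivation of Raney's identity from the generators $c_{y_{i}} \circ a_{x_{i}}$, which makes more visible the mechanism by which \cdity emerges. One small point worth making explicit in your write-up: the inference $c_{u(y)} = c_{y} \Rightarrow u(y) = y$ evaluates at $\top$ and needs $\top \neq \bot$, which is harmless since the one-element lattice is \cd.
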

\begin{proof}
  Let $u$ be unit for $(\HomR{L,L},\circ)$ and write
  $u = \bigvee_{i \in I} c_{y_{i}} \circ a_{x_{i}}$.  For an arbitrary
  $x \in L$, evaluate at $x$ the identity
  \begin{align*}
    a_{x} & = a_{x} \circ u = \bigvee a_{x}\circ c_{y_{i}} \circ a_{x_{i}}
  \end{align*}
   and deduce that, for each $i \in I$,
  $\bot = a_{x}\circ c_{y_{i}} \circ a_{x_{i}}(x)$.
  This happens exactly when $x \leq x_{i}$ or $y_{i} \leq x$, that is,
  when $c_{y_{i}}\circ a_{x_{i}} \leq x \upTensor x$.  Since $x \in L$
  and $i \in I$ are arbitrary, we have, within $\Hom{L,L}$,
  $u  = \bigvee_{i \in I} c_{y_{i}}\circ a_{x_{i}} \leq
  \bigwedge_{x \in L} x \upTensor x = id_{L}$.
  Again, for $y \in L$ arbitrary, evaluate at $\top$ the identity
  \begin{align*}
    c_{y} & = u \circ c_{y}  = \bigvee  c_{y_{i}} \circ a_{x_{i}}\circ c_{y}\,.
  \end{align*}
   and deduce that $y = \bigvee_{y \not\leq x_{i}} y_{i}$.
  Considering that $c_{y_{i}} \circ a_{x_{i}} \leq id_{L}$, then we have
  $y_{i} \leq \meetof{id}(x_{i})$ and therefore
    \begin{align*}
    y & = \bigvee_{y \not\leq x_{i}} y_{i} \leq \bigvee_{y \not\leq
      x_{i}} \meetof{id}(x_{i}) \leq \bigvee_{y \not\leq t}
    \meetof{id}(t) = \jm{id}(y)\,.
  \end{align*}
  Since this holds for any $y \in L$, $id \leq \jm{id}$ and since the
  opposite inclusion always holds, then $id = \jm{id}$. By \RT, $L$ is
  a \cdlatt.  
\end{proof}

Recall that a \emph{dualizing element} in a quantale $(Q,\circ)$ is an
element $0 \in Q$ such that
$0 \upon (x \below 0) = (0 \upon x) \below 0 = x$, for each $x \in Q$.
As consequences of Proposition~\ref{prop:QTightNotUnital}, we obtain:
\begin{corollary}
  Unless $L$ is a \cdlatt,
  \begin{enumerate}[(i)]
  \item the quantale $(\HomR{L,L},\circ)$ has no dualizing element,
  \item
    the interior operator $\interior{\FUN}$ obtained
  by composing the two Raney's transform is not a conucleus on
  $\HLL$.
  \end{enumerate}

\end{corollary}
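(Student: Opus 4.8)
The plan is to read both items off Proposition~\ref{prop:QTightNotUnital}, which asserts that $(\HomR{L,L},\circ)$ fails to be unital unless $L$ is \cd, by combining it with two standard structural facts about quantales.

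For $(i)$ I would invoke the general principle that \emph{a \qe carrying a dualizing element is automatically unital}. If $0$ is dualizing then, by the very definition recalled just above, the maps $x \mapsto x \below 0$ and $x \mapsto 0 \upon x$ are mutually inverse antitone bijections of the quantale; feeding this into the residuation adjunctions, one checks that $0 \upon 0$ (equivalently $0 \below 0$) is a two-sided unit. This is exactly the sort of ``general statement on \irl{s}'' announced in the introduction, so I would either cite it or append the short inverse-bijection verification. Granting it, the argument is contrapositive: since $(\HomR{L,L},\circ)$ is a \qe by Proposition~\ref{prop:ConsNat}$(iii)$, a dualizing element would make it unital, whence Proposition~\ref{prop:QTightNotUnital} would force $L$ to be \cd. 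Thus, unless $L$ is \cd, $(\HomR{L,L},\circ)$ has no dualizing element.

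For $(ii)$ I would first record that $\interior{\FUN} = \joinof{\meetof{\FUN}}$ really is an interior operator on $\HLL$: as $\joinof{\FUN} \adj \meetof{\FUN}$, the composite of the left adjoint after its right adjoint is a coclosure operator, and its fixed points are precisely the tight maps, so $\interior{\FUN}$ is the interior operator onto $\HomR{L,L}$. The key point is then the defining requirement that a conucleus fix the unit of the ambient unital quantale. The unit of $(\HLL,\circ)$ is $id_{L}$, so a conucleus structure would yield $\interior{\FUN}(id_{L}) = id_{L}$; equivalently, $id_{L}$ would be a (tight) unit for the subquantale $\HomR{L,L} = \interior{\FUN}(\HLL)$, contradicting Proposition~\ref{prop:QTightNotUnital} unless $L$ is \cd. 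I would also note the direct reformulation $\interior{\FUN}(id_{L}) = \jm{id}$, so that the condition $\jm{id} = id$ is, by \RT, just \cdity of $L$.

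The step I expect to be the main obstacle is the general fact used in $(i)$: that the mere presence of a dualizing element forces unitality. Everything else is bookkeeping around Proposition~\ref{prop:QTightNotUnital} and the adjunction $\joinof{\FUN} \adj \meetof{\FUN}$. In writing that fact out I would be careful that the computation producing $0 \upon 0$ as a unit nowhere presupposes a unit, precisely because the interest of the corollary is that $(\HomR{L,L},\circ)$ has none.
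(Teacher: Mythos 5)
Your argument is correct, and for item $(i)$ it coincides with the paper's: the paper's entire proof of $(i)$ is the one-line observation that a dualizing element $0$ makes $0 \below 0$ a unit, after which Proposition~\ref{prop:QTightNotUnital} applies; your inverse-bijection verification of that observation is sound (e.g.\ with $e = 0 \upon 0$ one gets $(e\circ x)\below 0 = x \below (e \below 0) = x \below 0$, and injectivity of $\intfun\below 0$ gives $e \circ x = x$), and you are right to flag it as the only step needing care. For item $(ii)$ you reach the same contradiction---a unit for $(\HomR{L,L},\circ)$---but through a different clause of the conucleus definition. The paper targets the lax-comultiplicativity inclusion $\interior{(g \circ f)} \leq \interior{g} \circ \interior{f}$: since the reverse inclusion always holds (as $\interior{g}\circ\interior{f}$ is tight and below $g \circ f$), a conucleus would force equality, and equality makes $\interior{1}$ a unit for $\HRLL$. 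You instead target the unit clause ($\sigma(1)=1$, or more weakly that $\sigma(1)$ be a unit for the image), getting either $\jm{id}=id$ directly and hence \cdity by \RT, or again a unit for $\HRLL$. Both readings are legitimate since the paper never fixes its definition of conucleus; your route is the more direct one when the unit clause is part of the definition, while the paper's shows that even the multiplicativity requirement alone already fails. Your preliminary observation that $\interior{\FUN}=\joinof{\meetof{\FUN}}$ is the coclosure operator of the adjunction $\joinof{\FUN} \adj \meetof{\FUN}$ with image $\HRXY$ is accurate and harmless.
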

\begin{proof}
  \begin{inparaenum}[(i)]
  \item If $0$ is dualizing, then $0 \below 0$ is a unit of the
quantale. 
\item 
  We argue that the inclusion
  $\interior{(g \circ f)} \leq \interior{g} \circ \interior{f}$,
  required for $\interior{\FUN}$ to be a conucleus on $\HLL$, does not
  hold, unless $L$ is \cd.  The opposite inclusion
  $\interior{g} \circ \interior{f} \leq \interior{(g \circ f)}$ holds
  since $\interior{g} \circ \interior{f}$ belongs to $\HomR{L,L}$,
  $\interior{g} \circ \interior{f} \leq g \circ f$, and
  $\interior{(g \circ f)}$ is the greatest element of $\HomR{L,L}$
  below $g \circ f$. If
  $ \interior{(g \circ f)} = \interior{g} \circ \interior{f}$ for each
  $f,g \in \HLL$, then
  $\interior{1}$ is a unit for $\HRLL$ and $L$ is \cd.
  \end{inparaenum}
\end{proof}

\section{Dualizing elements of $\HLL$}

We investigate in this section dualizing elements of $\HLL$.
Proposition~\propno in \cite{EGHK2018} states that if
$\joinof{id_{L}}$ is dualizing, then $L$ is \cd.
Recall that a \emph{cyclic element} in a quantale $(Q,\circ)$ is an
element $0 \in Q$ such that $0 \upon x = x \below 0$, for each
$x \in Q$.  Trivially, the top element of a quantale is cyclic.
Our work \cite{2020-RAMICS} proves that if $\HLL$ has a non-trivial
cyclic element, then this element is $\joinof{id_{L}}$ and, once more,
cyclicity of $\joinof{id_{L}}$ implies that $L$ is \cd.  It was still
open the possibility that $\HLL$ might have dualizing elements and no
non-trivial cyclic elements. This is possible in principle, since the
tool Mace4 \cite{prover9-mace4} provided us with an example of a
quantale where the unique dualizing element is not cyclic. The
quantale is built on the modular lattice $M_{5}$ (with atoms
$u,d,a,b,c$) and has the following multiplication table:
$$
\begin{array}{r|ccccccc}
  & \bot & u  & d & a & b & c & \top \\
  \hline
  \bot&\bot  &  \bot &  \bot &  \bot &  \bot &  \bot &  \bot   \\
  u& \bot &  u &    d &  a &  b &  c & \top    \\
  d& \bot &  d  &  \top &  \top &  \top &  \top &  \top   \\
  a& \bot &  a  &  \top &  \top &  \top &  d &  \top   \\
  b& \bot &  b  &  \top &  d &  \top &  \top &  \top   \\
  c& \bot &  c  &  \top &  \top &  d &  \top &  \top  \\
  \top &  \bot &  \top &  \top &  \top &  \top &  \top &  \top   
\end{array}
$$
It is verified that $d$ is the only non-cyclic element and that, at
the same time, it is the only dualizing element. For the quantale
$\HLL$ we shall see that existence of a dualizing element again
implies \cdity of $L$ (and therefore existence of a cyclic and
dualizing element).

\medskip

As this might be of more general interest, we are going to investigate
how divisions $\intfun \below f$ and $f \upon \intfun$ by an arbitrary
$f \in \HLL$ act on the $e_{y,x}$ and $y \upTensor x$. To this end, we
start remarking that \Rt{s} intervene in the formulas for computing
\ladj{s} of the maps $c$ and $a$.
\begin{lemma}
  \label{prop:acAdj}
  \label{lemma:acAdj}
  The functions $c : Y \rto \HXY$ and
  $a : X^{op} \rto \HXY$ have both a left and a \radj. Namely, for 
  each $x\in X$, $y \in Y$, and $f \in \HXY$, the
  following relations hold:
  \begin{align*}
    c_{y} \leq f & \Tiff y \leq \meetof{f}(\bot)\,, &
    f \leq c_{y} & \Tiff f(\top) \leq y\,,\\
    a_{x} \leq f & \Tiff \Starf{\top} \leq x\,, & f \leq a_{x}& \Tiff
    x \leq \ra{f}(\bot)\,.
  \end{align*}
\end{lemma}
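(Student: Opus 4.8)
The plan is to read all four biconditionals as adjunction relations and to verify each by unwinding the pointwise order on $\HXY$ and invoking a single standard fact. First I would note that both $c : Y \rto \HXY$ and $a : X^{op} \rto \HXY$ are order-preserving (for $a$, $x' \leq_X x$ gives $\set{t \mid t \not\leq x} \subseteq \set{t \mid t \not\leq x'}$, hence $a_x \leq a_{x'}$). The first two relations then exhibit the two adjoints of $c$: the map $f \mapsto \meetof{f}(\bot)$ will be its \radj and $f \mapsto f(\top)$ its \ladj. The last two exhibit the adjoints of $a$ seen as a map out of $X^{op}$: $f \mapsto \Starf{\top}$ as \radj and $f \mapsto \ra{f}(\bot)$ as \ladj. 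Once the four equivalences are in place nothing more is needed, since a biconditional ``$c_y \leq f \Tiff y \leq R(f)$'' is exactly the statement that $R$ is right adjoint to $c$.

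For the two $c$-relations I would argue pointwise. For $c_y \leq f$, the inequality at $t = \bot$ is automatic because $c_y(\bot) = \bot$, so $c_y \leq f$ reduces to $y \leq f(t)$ for every $t \neq \bot$, i.e.\ to $y \leq \bigwedge_{t \neq \bot} f(t) = \bigwedge_{t \not\leq \bot} f(t) = \meetof{f}(\bot)$, the defining formula of the Raney transform. For $f \leq c_y$, using that $f$ preserves $\bot$ the constraint at $t = \bot$ is again automatic, while for $t \neq \bot$ the inequality $f(t) \leq y$ holds for all such $t$ iff it holds at the largest instance $t = \top$, by monotonicity; this gives $f \leq c_y \Tiff f(\top) \leq y$.

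The two $a$-relations are handled similarly, but one carries the only real subtlety. For $f \leq a_x$, the pointwise constraint is nontrivial only when $t \leq x$, where it reads $f(t) = \bot$; by monotonicity this amounts to $f(x) = \bot$, equivalently $f(x) \leq \bot$, which by the adjunction $f \adj \ra{f}$ is exactly $x \leq \ra{f}(\bot)$. For $a_x \leq f$, the pointwise constraint is $f(t) = \top$ for every $t \not\leq x$, i.e.\ $\bigwedge_{t \not\leq x} f(t) = \top$, i.e.\ $\meetof{f}(x) = \top$; recalling $\Star{f} = \la{\meetof{f}}$ and hence $\Star{f} \adj \meetof{f}$, this is equivalent to $\top \leq \meetof{f}(x)$, that is to $\Starf{\top} \leq x$.

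The main point to get right is this last equivalence. It is tempting to deduce it from Lemma~\ref{lemma:relationDual} specialized to $y = \top$, but that lemma's condition~(i) reads ``$x \leq t$ or $\top \leq f(t)$'', whereas the condition governing $a_x \leq f$ is ``$t \leq x$ or $\top \leq f(t)$'': the two differ in the direction of the comparison between $x$ and $t$, so the lemma does not apply, and the equivalence must instead be routed through the definition $\meetof{f}(x) = \bigwedge_{t \not\leq x} f(t)$ together with the left-adjoint characterization $\Star{f} \adj \meetof{f}$. The only other thing to watch is the orientation of the order on the domain $X^{op}$ of $a$, which swaps the labels ``left'' and ``right'' but changes none of the computations above.
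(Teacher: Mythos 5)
Your proof is correct. It differs from the paper's own proof only in granularity: the paper disposes of each of the four equivalences in one line by first rewriting $c_{y} = c_{y}\circ a_{\bot} = y \upTensor \top$ and $a_{x} = c_{\top}\circ a_{x} = \bot \upTensor x$, and then invoking Lemma~\ref{lemma:tensors} for the two ``$f \leq \cdot$'' cases and Lemma~\ref{lemma:relationDual} together with the adjunctions $\Star{f} \adj \meetof{f}$ and $f \adj \ra{f}$ for the two ``$\cdot \leq f$'' cases. You instead re-derive the same facts pointwise from the definitions of $c_{y}$, $a_{x}$ and $\meetof{f}$. The two routes bottom out in identical computations, so nothing is gained or lost mathematically; the paper's version is shorter only because the bookkeeping was done in the preceding lemmas, while yours is more self-contained. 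Your framing of the four biconditionals as exhibiting the two adjoints of $c$ and of $a : X^{op} \rto \HXY$ matches the statement of the lemma.

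One remark on your final paragraph. You are right that item (i) of Lemma~\ref{lemma:relationDual}, as printed, reads ``$x \leq t$ or $y \leq f(t)$'', whereas the condition actually equivalent to $c_{y}\circ a_{x} \leq f$ is ``$t \leq x$ or $y \leq f(t)$''; this is a typo in item (i) (the ``direct verification'' of (i) $\Leftrightarrow$ (ii) forces $t \leq x$). But your conclusion that ``the lemma does not apply'' overshoots: items (ii), (iv) and (v) of that lemma are mutually consistent and, specialized to $y = \top$ via $a_{x} = c_{\top}\circ a_{x}$, yield exactly the chain $a_{x} \leq f$ iff $\top \leq \meetof{f}(x)$ iff $\Starf{\top} \leq x$. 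That is precisely the paper's derivation, and it coincides step for step with the computation you carry out by hand through the definition of $\meetof{f}$ and the adjunction $\Star{f} \adj \meetof{f}$. So the detour you describe as forced is in fact the intended use of the lemma, modulo the misprint in its item (i).
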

\begin{proofExtended}
  We have:
  \begin{enumerate}[(i)]
  \item $c_{y} \leq f$ iff $c_{y} \circ a_{\bot} \leq f$ iff
    $y \leq \meetof{f}(\bot)$.
  \item $f \leq c_{y}$ iff $f \leq y \tensor \top$ iff $f(\top)
    \leq y$.
    
  \item $a_{x} \leq f$ iff $c_{\top} \circ a_{x} \leq f$ iff $\top
    \leq \meetof{f}(x)$ iff $\la{\meetof{f}}(\top) \leq x$.
  \item $f \leq
    a_{x}$ iff $f \leq \bot \tensor x$ iff $f(x) \leq \bot$ iff $x
    \leq \ra{f}(\bot)$. \qedhere
  \end{enumerate}
\end{proofExtended}

Using the relations stated in Lemma~\ref{lemma:acAdj} computing divisions
becomes an easy task.
\begin{lemma}
  For each $x \in X$, $y \in Y$, and $f,g \in \HXY$, we have
  \begin{align*}
    f \upon a_{x} \ &  = \Perpf{x} \upTensor \top \quad (\,= c_{\Perpf{x}}\,)\,, &
    c_{y} \below f &  = \bot \upTensor \Starf{y} \quad (\,= a_{\Starf{y}}\,)\,, \\
    f \upon c_{y} & = \meetof{f}(\bot) \upTensor y\,, 
    & a_{x} \below f & = x \upTensor \Starf{\top}\,.
  \end{align*}
\end{lemma}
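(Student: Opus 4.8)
The plan is to read each of the four equalities off the universal property that \emph{defines} the division involved. By the relations $g \circ f \leq h \Leftrightarrow f \leq g \below h \Leftrightarrow g \leq h \upon f$ recalled above, an identity $f \upon a_{x} = r$ holds as soon as $g \circ a_{x} \leq f \Leftrightarrow g \leq r$ for every \jcf $g$, and dually an identity $a_{x}\below f = r$ holds as soon as $a_{x}\circ g \leq f \Leftrightarrow g \leq r$. So for each of the four cases I fix an arbitrary $g \in \HXY$ of the appropriate type and reduce both membership conditions to a single elementary inequality, whence the two sides coincide in the complete lattice $\HXY$.

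For the two right divisions the key is that precomposing with $a_{x}$ or $c_{y}$ collapses $g$ to an elementary shape, since those maps take only two values. First I would note $g \circ a_{x} = c_{g(\top)} \circ a_{x}$ and $g \circ c_{y} = c_{g(y)}$. Then $g \circ a_{x} \leq f$ reads $c_{g(\top)} \circ a_{x} \leq f$, which by Lemma~\ref{lemma:relationDual} is $g(\top) \leq \meetof{f}(x) = \Perpf{x}$; and $g \circ c_{y} \leq f$ reads $c_{g(y)} \leq f$, which by Lemma~\ref{lemma:acAdj} is $g(y) \leq \meetof{f}(\bot)$. On the target side, Lemma~\ref{lemma:acAdj} gives $g \leq c_{\Perpf{x}} \Leftrightarrow g(\top) \leq \Perpf{x}$, while Lemma~\ref{lemma:tensors} gives $g \leq \meetof{f}(\bot) \upTensor y \Leftrightarrow g(y) \leq \meetof{f}(\bot)$. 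Matching the two descriptions yields $f \upon a_{x} = c_{\Perpf{x}} = \Perpf{x} \upTensor \top$ and $f \upon c_{y} = \meetof{f}(\bot) \upTensor y$.

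For the two left divisions I would instead postcompose and rewrite using the naturality identities already recorded for Raney's transforms, namely $a_{x} \circ g = a_{\ra{g}(x)}$ and $c_{y} \circ g = c_{y} \circ a_{\ra{g}(\bot)}$ (the latter because $g(t)=\bot$ precisely when $t \leq \ra{g}(\bot)$). Then $a_{x} \circ g \leq f$ becomes $a_{\ra{g}(x)} \leq f$, i.e. $\Starf{\top} \leq \ra{g}(x)$ by Lemma~\ref{lemma:acAdj}, and by the adjunction $g \adj \ra{g}$ this is $g(\Starf{\top}) \leq x$, which by Lemma~\ref{lemma:tensors} is exactly $g \leq x \upTensor \Starf{\top}$. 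Likewise $c_{y} \circ g \leq f$ becomes $c_{y} \circ a_{\ra{g}(\bot)} \leq f$, i.e. $\Starf{y} \leq \ra{g}(\bot)$ by Lemma~\ref{lemma:relationDual}, which by Lemma~\ref{lemma:acAdj} reads $g \leq a_{\Starf{y}}$. Hence $a_{x} \below f = x \upTensor \Starf{\top}$ and $c_{y} \below f = \bot \upTensor \Starf{y} = a_{\Starf{y}}$.

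The step I expect to cost the most care is bookkeeping of the several copies of the maps $a$ and $c$: in each composite $g \circ a_{x}$, $a_{x} \circ g$, $g \circ c_{y}$, $c_{y} \circ g$ the constant and indicator maps live on different lattices, so one must read $a_{x}$ and $c_{y}$ with the domain and codomain forced by composability, exactly the abuse of notation already flagged for $e_{y,x} = c_{y} \circ a_{x}$. Conceptually, however, the whole computation rests on a single identification, packaged by Lemma~\ref{lemma:relationDual}, of the value $\Starf{y} = \la{\meetof{f}}(y)$ with the threshold $\bigvee \set{t \mid y \not\leq f(t)}$; once that equivalence is in hand, every inequality above reduces mechanically to a statement about $\meetof{f}$, the adjoint $\ra{g}$, and the generators $c$, $a$, and $\upTensor$.
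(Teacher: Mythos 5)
Your proposal is correct and follows essentially the same route as the paper: each division is computed by unwinding its defining adjunction against an arbitrary $g$, collapsing the composite $g \circ a_{x}$, $g \circ c_{y}$, $a_{x}\circ g$, $c_{y}\circ g$ to an elementary form (via the constancy of $a$ and $c$ and the naturality identity $a_{x}\circ g = a_{\ra{g}(x)}$), and then matching both sides through Lemma~\ref{lemma:relationDual}, Lemma~\ref{lemma:acAdj}, and Lemma~\ref{lemma:tensors}. The only cosmetic difference is that the paper reaches $c_{y}\circ g = c_{y}\circ a_{\ra{g}(\bot)}$ by inserting $a_{\bot}$ and applying naturality, where you verify the same identity directly.
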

\begin{proof}
  We compute as follows:
  \begin{align*}
    g \leq c_{y} \below f & \Tiff c_{y} \circ g \leq f
    \Tiff c_{y} \circ a_{\bot} \circ g \leq f
    \Tiff c_{y}
    \circ a_{\ra{g}(\bot)} \leq f
    \Tiff 
    \Starf{y}  \leq \ra{g}(\bot) 
    \Tiff g \leq a_{\Starf{y}}\,.
  \end{align*}
  where we have used Lemma~\ref{lemma:relationDual}. 
  Verification that $f \upon a_{x} = c_{\Perpf{x}}$ is similar.
\begin{shortened}
     Dually:
    \begin{align*}
      g \leq f \upon a_{x} & \Tiff g \circ a_{x}
      \leq f \Tiff g \circ c_{\top} \circ a_{x} \leq f
      \Tiff c_{g(\top)}\circ a_{x} \leq f \\
      & \Tiff g(\top) \leq \meetof{f}(x) \Tiff g \leq
      c_{\meetof{f}(x)}\,.
    \end{align*}
\end{shortened}
The other two identities are verified as follows:
  \begin{align*}
    h \leq f \upon c_{y} & \Tiff h \circ c_{y} \leq f \Tiff c_{h(y)}
    \leq f \Tiff h(y) \leq \meetof{f}(\bot) \Tiff h \leq
    \meetof{f}(\bot) \upTensor y\,, \intertext{and, dually,}
    h \leq a_{x} \below f & \Tiff a_{x} \circ h \leq  f \Tiff a_{\ra{h}(x)} \leq f
    \Tiff \Starf{\top} \leq \ra{h}(x) 
    \Tiff h(\Starf{\top}) \leq x  
    \Tiff h \leq x \upTensor \Starf{\top}\,.
    \tag*{\qedhere}
  \end{align*}
\end{proof}
The relations in the following proposition are then easily derived.
\begin{proposition}
  \label{prop:divisions}
  The following relations hold:
  \begin{align*}
    f \upon (c_{y} \circ a_{x}) & = \Perpf{x} \upTensor y \,,
    & 
    (c_{y} \circ a_{x}) \below f & = x \upTensor \Starf{y}\,, \\
    f \upon (y \upTensor x) 
    & = \Perpf{x} \upTensor \top  \land \Perpf{\bot} \upTensor y\,,
    &
    (y \upTensor x) \below f
    & = \bot \upTensor \Starf{y} \land x \upTensor \Starf{\top}
    \,.
  \end{align*}
\end{proposition}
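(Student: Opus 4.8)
The plan is to obtain all four equalities from the previous lemma (the one computing $f \upon a_{x}$, $c_{y} \below f$, $f \upon c_{y}$, and $a_{x} \below f$) by purely formal manipulations with the division operations. The two ingredients I will use are the residuation laws
\[
  f \upon (g \circ h) = (f \upon h) \upon g\,, \qquad
  (g \circ h) \below f = h \below (g \below f)\,,
\]
valid because $\circ$ is associative and residuated on both sides, together with the fact that $f \upon (\intfun)$ turns \joins in its (second) argument into \meets and $(\intfun) \below f$ does likewise in its first argument (both because $g \circ (\intfun)$ and $(\intfun) \circ h$ preserve \joins, $g$ being \jc).

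For the first equality I apply the left residuation law with $g = c_{y}$ and $h = a_{x}$, getting $f \upon (c_{y} \circ a_{x}) = (f \upon a_{x}) \upon c_{y}$, and substitute $f \upon a_{x} = c_{\Perpf{x}}$ from the previous lemma. This reduces the task to computing $c_{z} \upon c_{y}$ with $z = \Perpf{x}$, which the lemma entry $g \upon c_{y} = \meetof{g}(\bot) \upTensor y$ turns into $\meetof{(c_{z})}(\bot) \upTensor y$; the auxiliary identity $\meetof{(c_{z})}(\bot) = z$ (immediate from the definition of $\meetof{\FUN}$, or from $c_{w} \leq c_{z} \Tiff w \leq z$ via Lemma~\ref{lemma:acAdj}) then yields $\Perpf{x} \upTensor y$. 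The second equality is strictly dual: the right residuation law with $g = c_{y}$, $h = a_{x}$ gives $(c_{y} \circ a_{x}) \below f = a_{x} \below (c_{y} \below f) = a_{x} \below a_{\Starf{y}}$, and the lemma entry $a_{x} \below g = x \upTensor \Star{g}(\top)$ together with the dual auxiliary identity $\Star{(a_{z})}(\top) = z$ delivers $x \upTensor \Starf{y}$.

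For the last two equalities I use the decomposition $y \upTensor x = c_{y} \vee a_{x}$ recorded earlier. Splitting this join inside the second argument of $\upon$ converts it into a meet, so $f \upon (y \upTensor x) = (f \upon c_{y}) \land (f \upon a_{x}) = (\Perpf{\bot} \upTensor y) \land (\Perpf{x} \upTensor \top)$, which is the stated relation after reordering the meet; symmetrically, splitting the same join inside the first argument of $\below$ gives $(y \upTensor x) \below f = (c_{y} \below f) \land (a_{x} \below f) = (\bot \upTensor \Starf{y}) \land (x \upTensor \Starf{\top})$.

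I expect no genuine obstacle here, since once the residuation laws and the two auxiliary identities are recorded everything is a direct substitution. The only thing requiring care is the bookkeeping of which factor sits on the left and which on the right of $\circ$, so that the correct residuation law and the correct lemma entry are invoked; one should also note that $\meetof{(c_{z})}(\bot) = z$ and $\Star{(a_{z})}(\top) = z$ presuppose $L$ non-degenerate, so that the index set $\set{t \mid t \neq \bot}$ over which the relevant \meets and \joins range is non-empty.
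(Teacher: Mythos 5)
Your proof is correct and follows essentially the same route as the paper's: the residuation laws $f \upon (g \circ h) = (f \upon h) \upon g$ and $(g \circ h) \below f = h \below (g \below f)$ for the first two identities, and the decomposition $y \upTensor x = c_{y} \vee a_{x}$, turned into a meet by the antitone divisions, for the last two. The only (harmless) difference is in the final step of the first two identities, where the paper directly invokes the recorded relations $c_{y} \upon c_{x} = y \upTensor x$ and $a_{y} \below a_{x} = y \upTensor x$, while you re-derive them from the division formulas via the auxiliary identities $\meetof{(c_{z})}(\bot) = z$ and $\Star{(a_{z})}(\top) = z$.
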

\begin{proofExtended}
  Compute as follows:
  \begin{align*}
    f \upon (c_{y} \circ a_{x}) & = (f \upon a_{x}) \upon c_{y}
    = c_{\Perpf{x}} \upon c_{y} = \Perpf{x} \upTensor y\,, \\
    (c_{y} \circ a_{x}) \below f & = a_{x} \below (c_{y} \below f) =
    a_{x} \below a_{(\Starf{y})} = x \upTensor \Starf{y}\,, \\
        f \upon (y \upTensor x) & = f \upon (c_{y} \vee a_{x}) = (f \upon
    c_{y}) \land (f \upon a_{x}) =  \Perpf{\bot} \upTensor y \land
    \Perpf{x} \upTensor \top \,,
    \\(y \upTensor x) \below f & = (c_{y} \vee a_{x})\below f =
    (c_{y}\below f) \land (a_{x} \below f) 
    = \bot \upTensor \Starf{y} \land (x \upTensor \Starf{\top}) 
   \,.
    \tag*{\qedhere}
  \end{align*}
\end{proofExtended}
  From Proposition~\ref{prop:divisions}, it follows that
  \begin{align*}
    c_{\Perpf{x}} \circ a_{y} & \leq f \upon (y \upTensor x) \,,
    &
    c_{x} \circ a_{\Starf{y}} & \leq (y \upTensor x) \below f
    \,. 
  \end{align*}

  If $\Star{f}$ is invertible, then its inverse is also its
  \radj. From the uniqueness of the right adjoint it follows that
  $\Star{f}$ is inverted by its \radj $\meetof{f}$. Thus, $\Star{f}$
  is invertible if and only if $\meetof{f}$ is invertible, in which
  case we have $\Star{f}(\top) = \top$ and $\meetof{f}(\bot) = \bot$,
  since both $\Star{f}$ and $\meetof{f}$ are bicontinuous.
In some important case, the expressions exhibited in
Proposition~\ref{prop:divisions} simplify:
\begin{corollary}
  \label{corollary:transformTensors}
  If $\meetof{f}(\bot) = \bot$ (resp., $\Star{f}(\top) = \top$), then
  \begin{align*}
    f \upon (y \upTensor x) 
    &= c_{\Perpf{x}} \circ a_{y}
    \quad (\text{resp.,} \;\;
    (y \upTensor x) \below f
    = c_{x} \circ a_{\Starf{y}})\,.
  \end{align*}
  These relations hold as soon as either $\meetof{f}$ or $\Star{f}$ is
  invertible.
\end{corollary}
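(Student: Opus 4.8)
The plan is to read both identities directly off the formulas already established in Proposition~\ref{prop:divisions}, invoking only the elementary relations $c_{z} = z \upTensor \top$ and $a_{w} = \bot \upTensor w$ together with $c_{z} \circ a_{w} = c_{z} \land a_{w}$ recorded among the relations between the basic functions $c$, $a$, and $e$.

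For the first identity I would begin with
\begin{align*}
  f \upon (y \upTensor x) & = \Perpf{x} \upTensor \top \land \Perpf{\bot} \upTensor y
\end{align*}
and feed in the hypothesis $\meetof{f}(\bot) = \bot$, i.e.\ $\Perpf{\bot} = \bot$. The first conjunct is $c_{\Perpf{x}}$ because $z \upTensor \top = c_{z}$, while the hypothesis turns the second conjunct into $\bot \upTensor y = a_{y}$. Hence $f \upon (y \upTensor x) = c_{\Perpf{x}} \land a_{y}$, and $c_{z} \land a_{w} = c_{z} \circ a_{w}$ yields the asserted $c_{\Perpf{x}} \circ a_{y}$.

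The parenthetical identity is the dual computation: starting from $(y \upTensor x) \below f = \bot \upTensor \Starf{y} \land x \upTensor \Starf{\top}$, the first conjunct is $a_{\Starf{y}}$, and the hypothesis $\Star{f}(\top) = \top$ collapses $x \upTensor \Starf{\top}$ to $x \upTensor \top = c_{x}$; the same relation $c_{x} \land a_{\Starf{y}} = c_{x} \circ a_{\Starf{y}}$ then closes the argument.

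For the final sentence I would simply cite the observation recorded just before the statement: invertibility of $\Star{f}$ is equivalent to that of $\meetof{f}$, and in either case bicontinuity forces $\meetof{f}(\bot) = \bot$ and $\Star{f}(\top) = \top$ simultaneously, so that both hypotheses---and hence both identities---are available at once. There is no real obstacle here, since everything is already packaged in Proposition~\ref{prop:divisions} and in the dictionary between the $c$, $a$ maps and elementary tensors; the only care required is to match each conjunct to the correct basic function and to notice that, under the stated hypothesis, one conjunct degenerates precisely to the $a$- or $c$-map needed to rewrite the meet as a composite.
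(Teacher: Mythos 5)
Your derivation is correct and is precisely the route the paper intends: the corollary is stated without a separate proof, immediately after Proposition~\ref{prop:divisions} and after the remark that invertibility of $\Star{f}$ is equivalent to that of $\meetof{f}$ and forces $\meetof{f}(\bot)=\bot$ and $\Star{f}(\top)=\top$ by bicontinuity. Specializing the two conjuncts via $z \upTensor \top = c_{z}$, $\bot \upTensor w = a_{w}$, and $c_{z} \land a_{w} = c_{z} \circ a_{w}$ is exactly the intended reading, so there is nothing to add.
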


\begin{example}
  If $L$ is a \cdlatt, then the relation $\jm{id_{L}} = id_{L}$ holds,
  by \RT \cite{Raney60}. Let $o = \joinof{id_{L}}$, then $\meetof{o}$
  is invertible, since it is the identity.  Necessarily, we also have
  $\Star{o} = id_{L}$ and therefore: 
  \begin{align*}
    o \upon (c_{y} \circ a_{x}) & = (c_{y} \circ a_{x}) \below o = x
    \upTensor y \,,
    &
    o \upon (y \upTensor x) & = (y \upTensor x) \below o  =
    c_{x} \circ a_{y}\,.
    \tag*{\EndOfExample}
  \end{align*}
\end{example}

\begin{theorem}
  \label{thm:bijection}
  If $f \in \EL$ is dualizing, then $\meetof{f}$ and
  $\Star{f}$
  are inverse to each other and $L$ is \cd.
\end{theorem}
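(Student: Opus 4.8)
The plan is to extract from the two defining equations of a dualizing element the fact that $\meetof{f}$ and $\Star{f}$ are mutually inverse, and then to read off \cdity from \RT. Recall that $f$ dualizing means $f \upon (g \below f) = (f \upon g) \below f = g$ for every $g \in \EL$, and recall from the division lemma computed above the two identities $c_{y} \below f = a_{\Starf{y}}$ and $f \upon a_{x} = c_{\Perpf{x}}$.

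First I would feed the generators $c_{y}$ and $a_{x}$ into these equations. Taking $g = c_{y}$ gives
\begin{align*}
  c_{y} = f \upon (c_{y} \below f) = f \upon a_{\Starf{y}} = c_{\meetof{f}(\Starf{y})}\,;
\end{align*}
since $z \mapsto c_{z}$ is injective (evaluate at $\top$), this forces $\meetof{f}(\Starf{y}) = y$, that is, $\meetof{f} \circ \Star{f} = id_{L}$. Symmetrically, taking $g = a_{x}$ gives
\begin{align*}
  a_{x} = (f \upon a_{x}) \below f = c_{\Perpf{x}} \below f = a_{\Starf{\Perpf{x}}}\,,
\end{align*}
and injectivity of $z \mapsto a_{z}$ yields $\Starf{\Perpf{x}} = x$, i.e. $\Star{f} \circ \meetof{f} = id_{L}$. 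Hence $\meetof{f}$ and $\Star{f}$ are inverse to each other (in passing this recovers $\meetof{f}(\bot) = \bot$ and $\Starf{\top} = \top$, as anticipated before the statement).

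For the \cdity claim I would then argue as follows. Since $\meetof{f}$ is \mc and $\Star{f}$ is \jc, both are monotone, and being mutually inverse they constitute an order-automorphism of $L$; in particular $\meetof{f}$ preserves arbitrary \joins, so $\meetof{f} \in \EL$. On the other hand $\Star{f} = \joinof{\ra{f}}$ lies in the image of Raney's transform $\joinof{\FUN}$, hence $\Star{f} \in \HRLL$ is tight. By Proposition~\ref{prop:ConsNat}(i) the set $\HRLL$ is a bi-ideal of $\EL$, so $\Star{f} \circ \meetof{f} \in \HRLL$; but $\Star{f} \circ \meetof{f} = id_{L}$, whence $id_{L} \in \HRLL$, that is $id = \jm{id}$. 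By \RT this identity characterizes \cdlatt{s}, so $L$ is \cd.

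The routine part is the computation with generators: the division identities for $a_{x}$ and $c_{y}$ do all the work, so no case analysis on the shape of $e_{y,x}$ or of $y \upTensor x$ is actually needed. The conceptual step I expect to be the real point is the last one: recognizing that mutual inverseness promotes the \mc map $\meetof{f}$ to a genuine automorphism, hence to an element of $\EL$, so that the bi-ideal property of $\HRLL$ can be triggered with $\Star{f}$ tight to land $id_{L}$ inside $\HRLL$. This is precisely Raney's criterion $id = \jm{id}$, and it is what converts a purely quantale-theoretic hypothesis---existence of a dualizing element, not assumed cyclic---into the order-theoretic conclusion of \cdity.
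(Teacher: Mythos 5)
Your argument is correct and follows essentially the same route as the paper: feed the generators $c_{y}$ and $a_{x}$ into the dualizing equations, use the division identities $c_{y}\below f = a_{\Starf{y}}$ and $f \upon a_{x} = c_{\Perpf{x}}$ together with injectivity of $c$ and $a$ to get that $\meetof{f}$ and $\Star{f}$ are mutually inverse, then land $id_{L}=\meetof{f}\circ\Star{f}$ in $\HRLL$ via tightness of $\Star{f}=\joinof{\ra{f}}$ and the (bi-)ideal property, and conclude by \RT. Your explicit remark that mutual inverseness promotes the \mc map $\meetof{f}$ to an automorphism, hence to an element of $\EL$ so that the bi-ideal property is actually applicable, is a small step the paper leaves implicit, but it is not a different approach.
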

\begin{proof}
  If $f$ is dualizing, then, for all $x,y \in L$,
  \begin{align*}
    c_{y} & = f \upon (c_{y} \below f) = f \upon a_{\Starf{y}} =
    c_{\Perpf{\Starf{y}}} \,,
    \;\quad
    a_{x}  = (f \upon a_{x}) \below f = c_{\Perpf{x}} \below f
    = a_{\Starf{\Perpf{x}}} \,,
  \end{align*}
  and since both $c$ and $a$ are injective, then
  $y = \Perpf{\Starf{y}}$ and $x = \Starf{\Perpf{x}}$. Thus,
  $\meetof{f}$ and $\Star{f}$ are inverse to each other.
  Remark now that $\Star{f} \in \HomR{L,L}$, since
  $\Star{f} = \la{\meetof{f}} = \joinof{\ra{f}}$.  Then
  $id_{L} = \meetof{f} \circ \Star{f} \in \HomR{L,L}$, since
  $\HomR{L,L}$ is an ideal of $\EL$. Then $L$ is \cd by \RT.
\end{proof}
It is not difficult to give a direct proof of the converse, namely
that if $\Star{f}$ is invertible, then $f$ is dualizing. We prove this
as a general statement about \irl{s}. Notice that the map sending $f$
to $\Star{f}$ is definable in the language of \irl{s}, since
$\Star{f} = f \below o$, where $o = \joinof{id_{L}}$ is the canonical
cyclic dualizing element of $\EL$ (if $L$ is \cd). 
The statement in the following Proposition~\ref{prop:bijection} is
implicit in the definition of a (symmetric) compact closed category in
\cite{KellyLaplaza1980} (see \cite[\S 5]{Yetter2001} for the non
symmetric version of this notion).
\begin{proposition}
  \label{prop:bijection}
  In every \irl $Q$, $f$ is dualizing if and only if $\Star{f}$ is
  invertible.
\end{proposition}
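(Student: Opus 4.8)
The plan is to express both residuals by $f$ through the two negations attached to the dualizing element and through multiplication by $\Star f$, after which dualizability reads off as invertibility. Write $0$ for the dualizing element of the \irl $Q$, with multiplicative unit $1$, and set $\Star x \eqdef x \below 0$ and $\Perp x \eqdef 0 \upon x$. By the definition of a dualizing element, $\Perp{(\Star x)} = \Star{(\Perp x)} = x$, so $\Star{(\intfun)}$ and $\Perp{(\intfun)}$ are mutually inverse, order-reversing bijections of $Q$. First I would record, for every $x \in Q$, the identities
\begin{align*}
  x \below f & = \Star x \upon \Star f\,, & f \upon x & = \Perp{(x \circ \Star f)}\,.
\end{align*}
Both follow by writing $f = \Perp{(\Star f)} = 0 \upon \Star f$ and sliding the factor $\Star f$ across $0$ using the adjunction $a \circ b \leq c \Leftrightarrow b \leq a \below c \Leftrightarrow a \leq c \upon b$; this is the routine, computational part.

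For the forward implication, recall that $f$ being dualizing means exactly that the order-reversing maps $R \eqdef (\intfun)\below f$ and $L \eqdef f \upon (\intfun)$ satisfy $L\circ R = R \circ L = id_{Q}$, i.e. they are mutually inverse bijections of $Q$. The second identity factors $L$ as $L = \Perp{(\intfun)} \circ m$, where $m(x) \eqdef x \circ \Star f$ is right multiplication by $\Star f$. Since $\Perp{(\intfun)}$ is a bijection (with inverse $\Star{(\intfun)}$) and $L$ is a bijection, $m = \Star{(\intfun)} \circ L$ is a bijection of $Q$. I would then invoke the elementary monoid fact that a bijective one-sided multiplication is invertible: surjectivity of $m$ produces $b$ with $b \circ \Star f = 1$, and injectivity of $m$ (right-cancellativity) applied to $(\Star f \circ b) \circ \Star f = \Star f = 1 \circ \Star f$ yields $\Star f \circ b = 1$. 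Hence $\Star f$ is invertible.

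For the converse, suppose $\Star f$ is invertible, say $g \circ \Star f = \Star f \circ g = 1$. Then right multiplication by $g$ inverts right multiplication by $\Star f$, so the residual in the first identity collapses: $x \below f = \Star x \upon \Star f = \Star x \circ g$. Substituting into the two identities and using $\Perp{(\Star z)} = \Star{(\Perp z)} = z$ gives, by direct calculation, $f \upon (x \below f) = \Perp{((\Star x \circ g) \circ \Star f)} = \Perp{(\Star x)} = x$ and, symmetrically, $(f \upon x)\below f = \Star{(\Perp{(x \circ \Star f)})} \circ g = (x \circ \Star f)\circ g = x$. Thus $f$ is dualizing.

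The two displayed residuation identities carry all the bookkeeping, so the only genuinely delicate point is the forward direction: \textbf{the main obstacle} is to upgrade ``right multiplication by $\Star f$ is a bijection'' to ``$\Star f$ is invertible'', and, in the non-cyclic case, to track carefully which negation ($\Star{(\intfun)}$ versus $\Perp{(\intfun)}$) occurs in the factorization of $L$, so that it is precisely $\Star f$---and not $\Perp f$---whose invertibility is extracted. Once the identities and the monoid fact are in place, everything else is mechanical.
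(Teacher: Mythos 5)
Your proof is correct. It rests on essentially the same residuation identities as the paper's --- in substance $x \below f = \Star{x} \upon \Star{f}$ and $\Star{(f \upon u)} = u \circ \Star{f}$ --- but the way invertibility is extracted differs at both ends. In the forward direction the paper applies $\Star{(\intfun)}$ to the dualizing identity to get $x = (x \upon \Star{f}) \circ \Star{f}$ and then specializes at $x = 1$ twice, once for each one-sided inverse; you instead factor $f \upon (\intfun)$ as $\Perp{(\intfun)}$ composed with right multiplication by $\Star{f}$, deduce that this multiplication map is a bijection, and invoke the monoid fact that a bijective one-sided translation yields a two-sided inverse (surjectivity gives $b \circ \Star{f} = 1$, then cancellation gives $\Star{f} \circ b = 1$), which spares you the ``similarly'' the paper needs for the second one-sided identity. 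In the converse the paper argues by an inequality ($\Star{x} \leq (\Star{x} \upon \Star{f}) \circ \Star{f}$, dualized and combined with the always-valid $x \leq f \upon (x \below f)$), whereas you observe that division by an invertible element is multiplication by its inverse and verify both dualizing identities as chains of equalities. A small bonus of your write-up is that you keep the two negations $\Star{(\intfun)}$ and $\Perp{(\intfun)}$ distinct throughout, so the argument goes through verbatim when the dualizing element $0$ of $Q$ is not cyclic; the paper's computation writes $\Star{\Star{x}} = x$ and thus tacitly identifies the two negations.
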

\begin{proof}
  If $f \in Q$ is dualizing, then
  $f \upon (\Star{x} \below f) = (f \upon \Star{x}) \below f =
  \Star{x}$, for each $x \in Q$. Using well known identities of
  \irl{s}, compute as follows:
  \begin{align*}
    x & = \Star{\Star{x}} = \Star{(f \upon (\Star{x} \below f))} = (\Star{x}
    \below f) \circ \Star{f} = (x \upon \Star{f}) \circ \Star{f}\,.
  \end{align*}
  Letting $x = 1$, then $1 = (1 \upon \Star{f}) \circ \Star{f}$.
  We derive $1 = \Star{f} \circ (\Star{f} \below 1)$ similarly, from
  which it follows that $\Star{f}$ is inverted by $1 \upon \Star{f} =
  \Star{f} \below 1$.

  Conversely, suppose that $\Star{f}$ is invertible, say
  $\Star{f} \circ g = g \circ \Star{f} = 1$. It immediately follows
  that $g = 1 \upon \Star{f} = \Star{f} \below 1$. Again, for each
  $x \in Q$,
  \begin{align*}
    \Star{x} & = \Star{x} \circ (1 \upon \Star{f}) \circ \Star{f} \leq (\Star{x}
    \upon \Star{f}) \circ \Star{f} \,,
  \end{align*}
  and then, dualizing this relation, we obtain
  \begin{align*}
    x = \Star{\Star{x}} & \geq \Star{((\Star{x}
    \upon \Star{f}) \circ \Star{f})} =  f \upon (x \below f)\,.
  \end{align*}
  Since $x \leq f \upon (x \below f)$ always hold, we have
  $x = f \upon (x \below f)$. The identity $x = (f \upon x) \below f$
  is derived similarly.
 \end{proof}

\begin{example}
  If $L = [0,1]$, then $f$ is dualizing if and only if it is
  invertible. Indeed, $f$ is dualizing iff $\Star{f}$ is invertible
  iff $\meetof{f}$ is invertible.  Now, if $\meetof{f}$ is invertible,
  then it is continuous and $\meetof{f} = f$; therefore $f$ is
  invertible.  Similarly, if $f$ is invertible, then it is continuous
  and $\meetof{f} = f$; therefore $\meetof{f}$ is invertible.
  \eExample
\end{example}

\begin{example}
  Consider a poset $P$, the \clatt $\DP$ of downsets of $P$, and
  recall that $\DP$ is \cd. 
  The quantale $\HomJ{\DP,\DP}$ is isomorphic to the quantale of
  weakening relations (profuctors/bimodules) on the poset $P$. These
  are the relations $R \subseteq P \times P$ such that $yRx$,
  $y' \leq y$, and $x\leq x'$ imply $x'R y'$ (for all
  $x,x',y,y' \in P$). Thus, weakening relations are downsets of
  $P \times P^{op}$ and the bijection between $\HomJ{\DP,\DP}$ and
  $\DP[P \times P^{op}]$ goes along the lines described in previous
  sections, since
  \begin{align*}
    \HomJ{\DP,\DP} & \simeq \DP \tensor \DP[P^{op}] \simeq \DP[P \times P^{op}]\,. 
  \end{align*}
  Explicitly, this bijection, sending $f$ to $R_{f}$, is such that
  \begin{align}
    \label{eq:bijectionR}
    (y,x) \in R_{f} & \Tiff y \in f(\dset x) \Tiff c_{\dset y} \circ
    a_{\dset x} \leq f\,,
  \end{align}
  where, for $x \in P$, $\dset x \eqdef \set{y \in P \mid y \leq x}$.
  We have seen that dualizing elements of $\HomJ{\DP,\DP}$ are in
  bijection with automorphisms of $\DP$ which in turn are in bijection
  with automorphisms of $P$.  Given such an automorphism, we aim at
  computing the dualizing weakening relation corresponding to this
  automorphism. To this end, let us recall that, when $L$ is \cd,
  $o = \joinof{id} = \ell(\meetof{id})$ is the unique non-trivial
  cyclic element. Observe that since $o = \Star{id}$, $o$ is also
  dualizing and that
  \begin{align*}
    (y,x) \in R_{o} &\Tiff x \not\leq y\,.
  \end{align*}
  For $f \in \HomJ{\D(P),\D(P)}$, recall that $\Star{f} = f \below
  o$. We use the relations in \eqref{eq:bijectionR} to compute the
  dualizing element of $\DP[P\times P^{op}]$ corresponding to an
  invertible order preserving map $g : P \rto P$, as follows:
  \begin{align*}
    (y,x) \in R_{\Star{\DP[g]}}
    & \Tiff c_{\dset y}\circ a_{\dset x} \leq \Star{\DP[g]} = \DP[g] \below
    o \\
    & \Tiff \DP[g] \circ c_{\dset y}\circ a_{\dset x}  =  
    c_{\DP[g](\dset y)}\circ a_{\dset x}
    = c_{\dset g(y)}\circ a_{\dset x} \leq 0 \\
    &
    \Tiff
    x \not \leq g(y)\,.
    \tag*{\eExample}
  \end{align*}
\end{example}

\begin{shortened}

\begin{theorem}
  \label{thm:bijection}
  If $f \in \EL$ is dualizing if and only if $\meetof{f}$ and
  $\Star{f}$
  are inverse to each other and $L$ is \cd.
\end{theorem}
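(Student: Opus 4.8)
The statement is a biconditional, so I would establish the two implications separately, recycling the machinery already assembled. For the forward implication---that $f$ dualizing forces $\meetof{f}$ and $\Star{f}$ to be mutually inverse and $L$ to be \cd---I would run precisely the argument recorded for the one-directional Theorem~\ref{thm:bijection}: assuming $f$ dualizing, I would instantiate the two defining identities at the generators, namely $c_{y} = f \upon (c_{y} \below f)$ and $a_{x} = (f \upon a_{x}) \below f$. Evaluating the inner and outer divisions with the formulas of the Lemma preceding Proposition~\ref{prop:divisions}, which give $c_{y} \below f = a_{\Starf{y}}$ and $f \upon a_{x} = c_{\Perpf{x}}$, collapses these to $c_{y} = c_{\Perpf{\Starf{y}}}$ and $a_{x} = a_{\Starf{\Perpf{x}}}$. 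Injectivity of $c$ and $a$ then gives $y = \Perpf{\Starf{y}}$ and $x = \Starf{\Perpf{x}}$ for all $x,y$, i.e.\ $\meetof{f}$ and $\Star{f}$ are inverse to each other. Since $\Star{f} = \la{\meetof{f}} = \joinof{\ra{f}}$ lies in $\HomR{L,L}$, which is an ideal of $\EL$ by Proposition~\ref{prop:ConsNat}, the composite $id_{L} = \meetof{f} \circ \Star{f}$ lies in $\HomR{L,L}$ as well; by \RT the resulting identity $id = \jm{id}$ characterizes $L$ as \cd.

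For the converse I would invoke Proposition~\ref{prop:bijection}. Assuming $\meetof{f}$ and $\Star{f}$ are inverse to each other and $L$ is \cd, complete distributivity makes $\EL$ a \Gq, hence in particular an \irl, in which $o = \joinof{id_{L}}$ is the cyclic dualizing element and the abstract operation $f \below o$ coincides with the concrete $\Star{f}$. The hypothesis says exactly that $\Star{f}$ is invertible, with inverse $\meetof{f}$, so Proposition~\ref{prop:bijection} gives at once that $f$ is dualizing, closing the biconditional.

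The only genuinely load-bearing step is the converse's reliance on $\EL$ being an \irl, and this is exactly where \cdity of $L$ is indispensable: without it $o = \joinof{id_{L}}$ need not be dualizing, so the operation $f \mapsto f \below o$ is not available as an involution and Proposition~\ref{prop:bijection} cannot be applied. Everything else is bookkeeping---matching the concrete $\Star{f}$ with the abstract $f \below o$, and recalling that invertibility of $\Star{f}$ is equivalent to that of its right adjoint $\meetof{f}$, the inverse being forced to equal that adjoint.
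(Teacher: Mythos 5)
Your proof is correct, and its forward half coincides with the paper's argument verbatim: instantiate the dualizing identities at the generators $c_{y}$ and $a_{x}$, reduce via the division formulas to $c_{y}=c_{\Perpf{\Starf{y}}}$ and $a_{x}=a_{\Starf{\Perpf{x}}}$, use injectivity of $c$ and $a$ to get that $\meetof{f}$ and $\Star{f}$ are mutually inverse, and then derive \cdity from $id_{L}=\meetof{f}\circ\Star{f}\in\HomR{L,L}$ and \RT. Where you diverge is the converse. The proof the paper attaches to this biconditional does not pass through Proposition~\ref{prop:bijection}: it writes an arbitrary $g\in\EL$ as $\bigwedge_{g(x)\leq y}y\upTensor x$, bounds $g\below f$ from below by $\bigvee_{g(x)\leq y}c_{x}\circ a_{\Starf{y}}$ using the relations following Proposition~\ref{prop:divisions}, and then computes $f\upon(g\below f)\leq\bigwedge_{g(x)\leq y}\Perpf{\Starf{y}}\upTensor x=g$ directly from $\meetof{f}\circ\Star{f}=\Star{f}\circ\meetof{f}=id_{L}$ (and symmetrically for $(f\upon g)\below f$). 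Your route---identify the concrete $\Star{f}$ with $f\below\joinof{id_{L}}$ in the \Gq $\EL$ and invoke the general \irl statement of Proposition~\ref{prop:bijection}---is the one the paper itself advertises in the surrounding text, and it is sound; it buys generality (the argument lives in any \irl) at the price of first needing the \Gq structure on $\EL$. One caveat about your closing remark: \cdity is not genuinely indispensable for the converse. The paper's direct computation uses only that $\meetof{f}$ and $\Star{f}$ are mutually inverse, and in fact \cdity is already a consequence of that hypothesis (since $\Star{f}=\joinof{\ra{f}}\in\HomR{L,L}$ and $\HomR{L,L}$ is an ideal, $id_{L}=\meetof{f}\circ\Star{f}$ is tight). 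So \cdity is indispensable only for your particular route, which needs $\joinof{id_{L}}$ to be dualizing before Proposition~\ref{prop:bijection} can be applied; as a hypothesis of the converse it is redundant rather than load-bearing.
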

\begin{proof}
  If $f$ is dualizing, then, for all $x,y \in L$,
  \begin{align*}
    c_{y} & = f \upon (c_{y} \below f) = f \upon a_{\Starf{y}} =
    c_{\Perpf{\Starf{y}}} \,,
    \;
    a_{x}  = (f \upon a_{x}) \below f = c_{\Perpf{x}} \below f
    = a_{\Starf{\Perpf{x}}} \,,
  \end{align*}
  and since both $c$ and $a$ are injective, then
  $y = \Perpf{\Starf{y}}$ and $x = \Starf{\Perpf{x}}$. Thus,
  $\meetof{f}$ and $\Star{f}$ are inverse to each other.
    
  Suppose now that $\meetof{f}$ is invertible. 
  For an arbitrary $g \in \EL$, we need to argue that
  $g = f \upon (g \below f) = (f \upon g) \below f$.  Since the
  inclusions $g \leq f \upon (g\below f)$ and
  $g \leq (f \upon g)\below f$ always hold, we only need to argue that
  $f \upon (g\below f) \leq g$ and $(f \upon g)\below f \leq g$.
  Write $g = \bigwedge_{g(x) \leq y} y \upTensor x$.
  Since $\intfun \below f$ is antitone, and using the relations
  pointed out in Remark~\ref{remark:SemiTransformTensors}, the
  following inequalities hold:
  \begin{align*}
    g \below f & = (\bigwedge_{g(x) \leq y} y \upTensor x) \below f \geq
    \bigvee_{g(x) \leq y} ((y \upTensor x) \below f)
    \geq \bigvee_{g(x) \leq y} c_{x} \circ a_{\Starf{y}}\,.
  \end{align*}
  Since $f \upon \intfun $ is antitone and moreover it transforms
  joins into meets, we deduce
  \begin{align*}
    f \upon (g\below f) & \leq f \upon (\bigvee_{g(x) \leq y} c_{x} \circ
    a_{\Starf{y}}) =
     \bigwedge_{g(x) \leq y} 
    f \upon (c_{x} \circ
    a_{\Starf{y}})
    \\
    &
    =  \bigwedge_{g(x) \leq y}  \Perpf{\Starf{y}} \upTensor  x
    =  \bigwedge_{g(x) \leq y} y \upTensor x = g\,,
  \end{align*}
  where we have used $\meetof{f} \circ \Star{f} = id_{L}$.
  We deduce $(f \upon g)\below f \leq g$ in a similar way,
  using this time $ \Star{f} \circ \meetof{f}  = id_{L}$.
\end{proof}

  \begin{theorem}
    If $\EL$ has a dualizing element, then $L$ is \cd.
  \end{theorem}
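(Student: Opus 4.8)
The plan is to obtain this as an immediate corollary of Theorem~\ref{thm:bijection}. The hypothesis that \emph{$\EL$ has a dualizing element} asserts the existence of some $f \in \EL$ satisfying $f \upon (x \below f) = (f \upon x) \below f = x$ for every $x \in \EL$; such an $f$ is by definition a dualizing element of the quantale $(\EL,\circ)$, and this is exactly the hypothesis under which Theorem~\ref{thm:bijection} concludes that $L$ is \cd. So I would simply fix one such witness $f$ and invoke Theorem~\ref{thm:bijection} to conclude.

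For completeness I would recall why Theorem~\ref{thm:bijection} already delivers complete distributivity, since that is where all the content lies. Applying the dualizing equations to the generators $c_{y}$ and $a_{x}$ forces $y = \Perpf{\Starf{y}}$ and $x = \Starf{\Perpf{x}}$ for all $x,y \in L$, so $\meetof{f}$ and $\Star{f}$ are mutually inverse; in particular $id_{L} = \meetof{f} \circ \Star{f}$. Since $\Star{f} = \la{\meetof{f}} = \joinof{\ra{f}}$ lies in the bi-ideal $\HomR{L,L}$ of tight maps (Proposition~\ref{prop:ConsNat}), and $\HomR{L,L}$ absorbs composition, it follows that $id_{L} \in \HomR{L,L}$, i.e.\ $id = \jm{id}$. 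By \RT this identity characterizes exactly the \cd lattices, whence $L$ is \cd.

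I do not expect any genuine obstacle here: the only thing to verify is the purely formal point that the existential hypothesis of the present statement unpacks, for each witnessing element, into the per-element hypothesis of Theorem~\ref{thm:bijection}, which it does by definition. Thus no new computation is required beyond that already carried out for Theorem~\ref{thm:bijection} (and, should one prefer the symmetric reformulation in terms of invertibility of $\Star{f}$, Proposition~\ref{prop:bijection}).
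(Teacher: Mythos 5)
Your proposal is correct and follows essentially the same route as the paper: the paper also derives this theorem from the fact that a dualizing $f$ forces $\meetof{f}$ and $\Star{f}$ to be mutually inverse, then uses $\Star{f} = \joinof{\ra{f}} \in \HomR{L,L}$ together with the bi-ideal property to get $id_{L} = \jm{id_{L}}$ and concludes by \RT. Nothing further is needed.
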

  \begin{proof}
    There are many way of observing this.

    1. This is essentially the same argument developed in
    \cite{EGHK2018} when $f = \joinof{id}$.  Our previous computations
    show that if $f$ is dualizing, then $\intfun\below f$ sends the
    \et $y \upTensor x$ to $c_{x} \circ a_{\Starf{x}}$.
    As a shortcut towards establishing this relation, we recall that
    if $f$ is a dualizing element of a quantale $(Q,\circ)$, then the
    relation $(g \upon h) \below f = h \circ (g \below f) $ holds, for
    any $g,h \in Q$.  Thus (still using some of our observations)
    $(y \upTensor x) \below f = (c_{y} \upon c_{x}) \below f = c_{x}
    \circ (c_{y} \below f) = c_{x} \circ a_{\Starf{x}}$.

    Since the \et{s} $y \upTensor x$ generate $\EL$ under meets, then
    the $c_{x} \circ a_{y}$ generate $\EL$ under joins. Therefore,
    $\HomJ{L,L} = \HomR{L,L}$ and $L$ is \cd by \RT.

    2. Remark that $\Star{f} \in \HomR{L,L}$, since
    $\Star{f} = \joinof{\ra{f}}$.  If $f$ is dualizing, then
    $\Star{f}$ is inverted by $\meetof{f}$, and therefore
    $id_{L} = \meetof{f} \circ \Star{f} \in \HomR{L,L}$, since this is
    an ideal of $\EL$.
    Therefore $id_{L} = \jm{id_{L}}$ and, again, $L$ is \cd by \RT.
  \end{proof}
\end{shortened}

\section{Further \bimorphism{s}, bijections, directions}

Even when \Rt{s} are not inverse to each other, it might still be
asked whether there are other isomorphisms between $\HomM{L,L}$ and
$\EL$. By Fact~\ref{fact:bijection}, this question amounts to
understand whether $\EL$ is autodual.

Let us discuss the case when $L$ is a finite lattice.  We use $\JL$
for the set of \jirr elements of $L$ and $\ML$ for the set of \mirr
elements of $L$.
The reader will have no difficulties convincing himself of the
following statement:
\begin{lemma}
  A map $f \in \EL$ is \mirr if and only if it is an elementary
  tensor of the form $m \upTensor j$ with $m \in \ML$ and $j \in \JL$.
\end{lemma}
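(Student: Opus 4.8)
The plan is to first reduce the question to \et{s} and then to decide exactly which of these are \mirr. By Lemma~\ref{lemma:tensors} the \et{s} $y \upTensor x$ generate $\EL$ under meets, and in a finite lattice every \mirr element belongs to every meet-generating family (if $m=\bigwedge S'$ with $S'$ finite and $m$ \mirr, then $m\in S'$); hence every \mirr $f\in\EL$ is of the form $y\upTensor x$. Applying Lemma~\ref{lemma:tensors} to $f=y\upTensor x$ reads off the order between non-degenerate tensors (those with first coordinate $\neq\top$ and second coordinate $\neq\bot$): $y\upTensor x\leq w\upTensor z$ iff $y\leq w$ and $z\leq x$, so these are indexed injectively by pairs $(y,x)$, whereas $\top\upTensor x$ and $y\upTensor\bot$ both collapse to the top $c_{\top}$ of $\EL$, which is not \mirr. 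It therefore remains to prove that a non-degenerate $y\upTensor x$ is \mirr iff $y\in\ML$ and $x\in\JL$. Throughout, the one delicate point is that \emph{meets in $\EL$ are not pointwise}: $\bigwedge_{i}g_{i}$ is the largest \jcf lying below all the $g_{i}$.

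For the ``only if'' direction I argue contrapositively. If $y\upTensor x$ is non-degenerate but $y\notin\ML$, then $y\neq\top$ is meet-reducible, say $y=y_{1}\wedge y_{2}$ with $y_{1},y_{2}>y$, and one checks $y\upTensor x=(y_{1}\upTensor x)\wedge(y_{2}\upTensor x)$ with both factors strictly above $y\upTensor x$. If instead $x\notin\JL$, then $x\neq\bot$ is join-reducible, $x=x_{1}\vee x_{2}$ with $x_{1},x_{2}<x$, and $y\upTensor x=(y\upTensor x_{1})\wedge(y\upTensor x_{2})$, again with both factors strictly larger. In both cases I verify the identity via Lemma~\ref{lemma:tensors}: the largest \jcf $g$ below the two factors satisfies $g(x_{1})\leq y$ and $g(x_{2})\leq y$, hence $g(x)=g(x_{1}\vee x_{2})=g(x_{1})\vee g(x_{2})\leq y$, so $g\leq y\upTensor x$ (the reverse inequality being clear). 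This second computation is exactly where sup-preservation of $g$ is used, and this is the only place join-irreducibility of $x$ is needed.

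For the ``if'' direction let $m\in\ML$ with unique upper cover $m^{+}$ in $L$, and $j\in\JL$ with unique lower cover $j_{-}$. The tensors strictly above $m\upTensor j$ are the $w\upTensor z$ with $m\leq w$, $z\leq j$, $(w,z)\neq(m,j)$; each lies above $m^{+}\upTensor j$ (when $w>m$, so $w\geq m^{+}$) or above $m\upTensor j_{-}$ (when $w=m$, forcing $z<j$, hence $z\leq j_{-}$). Since the tensors meet-generate, this gives $\bigwedge\set{g\mid g>m\upTensor j}=(m^{+}\upTensor j)\wedge(m\upTensor j_{-})$, and it suffices to show this meet differs from $m\upTensor j$, i.e. to evaluate it at $j$.

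This meet is the main obstacle, precisely because meets are not pointwise. I would compute the pointwise meet $p$ of $m^{+}\upTensor j$ and $m\upTensor j_{-}$ and then show it is already a \jcf. Using that every $z\leq j$ satisfies $z\leq j_{-}$ or $z=j$ (join-irreducibility of $j$), one finds
\[
  p(z)=
  \begin{cases}
    \bot\,, & z=\bot\,,\\
    m\,, & \bot<z\leq j_{-}\,,\\
    m^{+}\,, & z=j\,,\\
    \top\,, & z\not\leq j\,.
  \end{cases}
\]
The only non-routine check is the jump from $m$ to $m^{+}$ at $j$: because $j$ is \jirr it is not the join of strictly smaller elements, so no join of elements on which $p$ takes the value $m$ can produce $j$, and sup-preservation is not broken; above $j$ the value is $\top$, so monotonicity is clear. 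Hence the meet equals $p$, and its value at $j$ is $m^{+}>m=(m\upTensor j)(j)$. Therefore $\bigwedge\set{g\mid g>m\upTensor j}>m\upTensor j$, i.e. $m\upTensor j$ has a unique upper cover (namely $p$) and is \mirr, completing the proof.
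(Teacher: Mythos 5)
Your proof is correct and follows essentially the same route as the paper's: meet-irreducibles must be elementary tensors because these meet-generate $\EL$, a tensor with meet-reducible first or join-reducible second coordinate splits as a proper meet, and $m \upTensor j$ is shown meet-irreducible by exhibiting $m^{+}\upTensor j \land m\upTensor j_{-}$ as its unique upper cover. Your explicit pointwise computation of that meet --- checking that it is sup-preserving and that it lies strictly above $m\upTensor j$ at the point $j$ --- supplies a detail the paper's argument leaves implicit, but it is a refinement rather than a different approach.
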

\begin{proofExtended}
  Every element of $\EL$ is a meet of elementary tensors and therefore
  \mirr element is an elementary tensor.  If $x,y \in L$,
  $y = \bigwedge_{i \in I} m_{i}$ and $x = \bigvee_{k \in K} j_{k}$
  with $m_{i} \in \ML$ and $j_{k }\in \JL$ for each $i \in I $ and
  $k \in K$, then
  $y \upTensor x = \bigwedge_{i \in I,k \in K} m_{i} \upTensor j_{k}$.
  Thus, every \mirr element of $\EL$ is an elementary tensor of the
  form $m \otimes j$ with $m \in \ML$ and $j \in \JL$.

  Let us argue that every such elementary tensor is \mirr.  We shall
  argue that $m^{\ast}\upTensor j \land m \upTensor j_{\ast}$ is the
  unique upper cover of $m \upTensor j$ and, to this goal, it shall be
  is enough to argue that if $y \upTensor x$ is an elementary tensor
  with $m \upTensor j \leq y \upTensor x$, then
  $m^{\ast}\upTensor j \leq y \upTensor x$ or
  $m \upTensor j_{\ast} \leq y \upTensor x$.

  The relation $m \upTensor j \leq y \upTensor x$ yields $m \leq y$ and
  $x \leq j$, while the relation $y \upTensor x \not\leq m \upTensor j$
  yields $(y \upTensor x)(j) \not\leq m$. If $x < j$, then
  $m \upTensor j_{\ast} \leq y \upTensor x$.  Thus suppose that $x = j$,
  so $y = (y \upTensor x)(j) \not\leq m$. Thus $m < y$,
  $m^{\ast} \leq y$ and $m^{\ast} \upTensor j \leq y \upTensor x$.
\end{proofExtended}
The following statement might instead be less immediate:
\begin{lemma}
  For each $j \in \JL$ and $m \in \ML$, the map $e_{j,m}$ is \jirr.
\end{lemma}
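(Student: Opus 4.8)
The plan is to establish join-irreducibility of $e_{j,m}$ directly from the order structure of $\EL$. First I would recall that joins in $\EL = \HomJ{L,L}$ are computed pointwise (a pointwise join of \jc maps is again \jc), and that in the finite lattice $\EL$ an element coincides with the join of the elements strictly below it exactly when it fails to be \jirr. Since $m$ is \mirr it has a unique upper cover $m^{\ast}$, and $e_{j,m}(m^{\ast}) = j \neq \bot$, so $e_{j,m}$ is not the bottom of $\EL$; it therefore suffices to show that $h \eqdef \bigvee \set{g \in \EL \mid g < e_{j,m}}$ is strictly below $e_{j,m}$.

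The heart of the argument is a rigidity statement: any $g \in \EL$ with $g \leq e_{j,m}$ and $g(m^{\ast}) = j$ must already equal $e_{j,m}$. I would prove this pointwise. For $t \leq m$ one has $g(t) \leq e_{j,m}(t) = \bot$. For $t \not\leq m$, meet-irreducibility of $m$ gives $t \vee m > m$, hence $t \vee m \geq m^{\ast}$; using that $g$ preserves binary joins together with $g(m) = \bot$, I compute $g(t) = g(t) \vee g(m) = g(t \vee m) \geq g(m^{\ast}) = j$, while $g \leq e_{j,m}$ forces $g(t) \leq j$. Thus $g(t) = j$ for every $t \not\leq m$, i.e. $g = e_{j,m}$.

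With this in hand I would finish by contradiction. Evaluating the candidate join at $m^{\ast}$ gives $h(m^{\ast}) = \bigvee \set{g(m^{\ast}) \mid g < e_{j,m}}$, a finite join of elements $\leq j$. If $h = e_{j,m}$, then $h(m^{\ast}) = j$, and since $j$ is \jirr some $g_{0} < e_{j,m}$ satisfies $g_{0}(m^{\ast}) = j$; the rigidity statement then yields $g_{0} = e_{j,m}$, contradicting $g_{0} < e_{j,m}$. Hence $h < e_{j,m}$ and $e_{j,m}$ is \jirr.

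I expect the only delicate point to be the rigidity statement, and within it the inequality $t \vee m \geq m^{\ast}$ for every $t \not\leq m$: this is exactly where meet-irreducibility of $m$ is used, and it is what lets the single value $g(m^{\ast})$ control $g$ on all of $\set{t \mid t \not\leq m}$. Join-irreducibility of $j$ is then needed only in the final step, to extract one witness $g_{0}$ from the join at $m^{\ast}$; so both hypotheses on $j$ and $m$ genuinely enter the proof.
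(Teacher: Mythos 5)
Your proof is correct and follows essentially the same route as the paper: the ``rigidity'' computation (using $t \vee m \geq m^{\ast}$ for $t \not\leq m$ and $g(t) = g(t \vee m) \geq g(m^{\ast}) = j$) is exactly the paper's central step, applied there to a component $f_{i}$ of an arbitrary decomposition $e_{j,m} = \bigvee_{i} f_{i}$ after join-irreducibility of $j$ extracts an $i$ with $f_{i}(m^{\ast}) = j$. Your repackaging via the join of all elements strictly below $e_{j,m}$ is an equivalent finite-lattice formulation and introduces no gap.
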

\begin{proof}
  Let $m \in \ML$ and $j \in \JL$, and let us use $m^{\ast}$ to denote
  the unique upper cover of $m$. Suppose that
  $e_{j,m} = \bigvee_{i \in I} f_{i}$.  By evaluating the two sides of
  this equality at $m^{\ast}$, we obtain
  $j = \bigvee_{i \in I} f_{i}(m^{\ast})$ and therefore
  $j = f_{i}(m^{\ast})$ for some $i \in I$.
  If $t \leq m$, then $f_{i}(t) \leq e_{m,j}(t) = \bot$. Suppose now
  that $t \not \leq m$, so $m < t \vee m$ and
  $m^{\ast} \leq t \vee m$.  Observe also that
  $f_{i}(t) \leq e_{j,m}(t) = j$, since
  $e_{j,m} = \bigvee_{i \in I} f_{i}$.  Then
  $j = f_{i}(m^{\ast}) \leq f_{i}(m \vee t) = f_{i}(m) \vee f_{i}(t) =
  \bot \vee f_{i}(t) = f_{i}(t)$, so $j = f_{i}(t)$. We have argued
  that $f_{i}(t) = e_{j,m}$, for all $t \in L$, and therefore that
  $f_{i} = e_{j,m}$.
\end{proof}

\begin{theorem}
  If $L$ is a finite lattice and  $\EL$ is autodual, then
  $L$ is distributive.
\end{theorem}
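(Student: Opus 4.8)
The plan is to read autoduality through the irreducible elements of the finite lattice $\EL$ and then run a counting argument fed by the two preceding lemmas. Recall from \eqref{eq:radj} that $\HomM{L,L} \iso \EL^{op}$, so an isomorphism $\HomM{L,L} \iso \EL$ is the same datum as an isomorphism $\EL^{op} \iso \EL$; thus, as foreshadowed by Fact~\ref{fact:bijection}, saying that $\EL$ is autodual amounts to saying that $\EL$ admits an order anti-automorphism $\phi$. Since $\EL$ is finite, a join-irreducible is exactly an element with a unique lower cover, and $\phi$ sends lower covers to upper covers and $\bot$ to $\top$; hence $\phi$ restricts to an injection of the join-irreducibles $\J(\EL)$ into the meet-irreducibles $\M(\EL)$. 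Applying the same to $\phi^{-1}$ gives the reverse injection, so autoduality forces $\card{\J(\EL)} = \card{\M(\EL)}$.

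Next I would pin down both cardinalities. By the characterization of meet-irreducibles, $\M(\EL)$ is precisely the set of elementary tensors $m \upTensor j$ with $m \in \ML$ and $j \in \JL$; inspecting values shows the parametrization is injective (the set on which $m \upTensor j$ equals $\top$ recovers $j$ via its complement $\dset j$, and the value on $\dset j \setminus \set{\bot}$ recovers $m$), so $\card{\M(\EL)} = \card{\ML}\cdot\card{\JL}$. Dually, the lemma that each $e_{j,m}$ is join-irreducible, together with the injectivity of $(j,m) \mapsto e_{j,m}$ (the set on which $e_{j,m}$ vanishes is $\dset m$, and its value elsewhere is $j$), exhibits $\card{\ML}\cdot\card{\JL}$ distinct join-irreducibles, giving the lower bound $\card{\J(\EL)} \ge \card{\ML}\cdot\card{\JL}$.

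Combining these with the equality forced by autoduality yields $\card{\J(\EL)} = \card{\ML}\cdot\card{\JL} = \card{\M(\EL)}$, so the maps $e_{j,m}$ (with $j\in\JL$, $m\in\ML$) are \emph{all} of the join-irreducibles of $\EL$. Each such $e_{j,m} = c_{j}\circ a_{m}$ lies in $\HomR{L,L}$. As $\EL$ is finite, every $f \in \EL$ is the join of the join-irreducibles below it, and $\HomR{L,L}$ is closed under joins and contains $\bot$; therefore $f \in \HomR{L,L}$, i.e. $\HomJ{L,L} = \HomR{L,L}$. By Proposition~\ref{prop:ConsNat} and \RT this means $id = \jm{id}$ and $L$ is \cd; since a finite \cd lattice is distributive, $L$ is distributive.

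The main obstacle I anticipate is not any single computation but making the counting watertight: one must verify that the families $m \upTensor j$ and $e_{j,m}$ are parametrized without repetition, so that the two cardinalities are \emph{exactly} $\card{\ML}\cdot\card{\JL}$, and one must genuinely use the autoduality equality $\card{\J(\EL)} = \card{\M(\EL)}$ to promote the mere lower bound on join-irreducibles to the statement that there are no others. Once $\J(\EL) = \set{e_{j,m} \mid j\in\JL,\, m\in\ML}$ is established, reconstructing an arbitrary $f$ as a join of tight maps and invoking \RT is routine.
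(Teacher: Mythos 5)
Your proof is correct and follows essentially the same route as the paper: an anti-automorphism of the finite lattice $\EL$ forces $\card{\J(\EL)} = \card{\M(\EL)}$, the two preceding lemmas identify $\card{\ML}\cdot\card{\JL}$ pairwise distinct elements in each of $\M(\EL)$ and $\J(\EL)$, and the resulting equality shows the $e_{j,m}$ exhaust $\J(\EL)$, whence $\EL = \HomR{L,L}$ and $L$ is distributive by \RT. Your added care about the injectivity of the parametrizations $(m,j)\mapsto m\upTensor j$ and $(j,m)\mapsto e_{j,m}$ fills in details the paper leaves implicit, but the argument is the same.
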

\begin{proof}
  If $\psi : \EL^{op} \rto \EL$ is invertible, then $\psi$ restricts
  to a bijection $\M(\EL) \rto \J(\EL)$, so these two sets have same
  cardinality.
  For $m \in \ML$ and $j \in \JL$, the $e_{j,m}$ as well as the
  elementary tensors $m \upTensor j$ are pairwise distinct.
  Therefore, we have
  $$
  \card{\ML} \times \card{\JL} \leq \card{\J(\EL)} = \card{\M(\EL)} =
  \card{\ML} \times \card{\JL}
  $$ 
  and $\card{\ML} \times \card{\JL} = \card{\J(\EL)}$.  That is, the
  elements $e_{j,m}$ are all the \jirr elements of $\EL$ and therefore
  the set $\set{e_{j,m} \mid j \in \JL, m \in \ML }$ generates
  $\EL$ under joins. It follows that $\EL = \HomR{L,L}$ and that $L$
  is distributive.
\end{proof}

We do not know yet if the theorem above can be generalized to infinite
\clatt{s} or whether there is some fancy infinite \clatt $L$ that is
not \cd and such that $\EL$ is autodual. It is clear, however, that in
order to construct such a fancy lattice, properties of \bimorphism{s}
$\psi : L \times L^{op} \rto \EL$ need to be investigated. What are
the properties of a \bimorphism $\psi$ forcing $L$ to be \cd when
$\tilde{\psi}$ is surjective? Taking the \bimorphism $e$ as example,
let us abstract part of \RT:
\begin{proposition}
  \label{prop:Rabstract}
  Let $\psi: L \times L^{op} \rto \EL$ be a \bimorphism such that for
  each $x,y \in L$, the image of $L$ under $\psi(y,x)$ is a finite
  chain.  If $id_{L}$ belongs to the image of
  $\tilde{\psi} : \HomM{L,L} \rto \HomJ{L,L}$, then $L$ is a \cdlatt.
\end{proposition}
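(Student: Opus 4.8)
The plan is to show that the image of $\tilde{\psi}$ is contained in the quantale $\HomR{L,L}$ of tight maps; since $id_{L}$ is assumed to lie in this image, it will follow that $id_{L} \in \HomR{L,L}$, that is, $id_{L} = \jm{id_{L}}$, and then \RT immediately yields that $L$ is \cd. Recall that $\tilde{\psi}(g) = \bigvee_{x \in L} \psi(g(x),x)$ and that $\HomR{L,L}$, being a bi-ideal of $\EL$ (equivalently, the image of the left adjoint $\funJoinOf$), is closed under arbitrary joins. Consequently it suffices to prove the single claim that each value $\psi(y,x)$ of the bimorphism belongs to $\HomR{L,L}$.

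This reduces the proposition to the following lemma, which is where I would concentrate the effort: \emph{every \jc endomap $h \colon L \rto L$ whose image is a finite chain belongs to $\HomR{L,L}$}. To prove it, write the image as $\bot = v_{0} < v_{1} < \dots < v_{n}$ (note $\bot$ lies in the image because $h$ preserves the empty join) and, for $1 \leq k \leq n$, set $x_{k} = \bigvee \set{t \in L \mid h(t) \leq v_{k-1}}$. Since $h$ preserves joins, $h(x_{k}) = v_{k-1}$, whence for every $t$ one has the equivalence $t \leq x_{k}$ iff $h(t) \leq v_{k-1}$. A direct evaluation then shows
\begin{align*}
  h & = \bigvee_{k=1}^{n} e_{v_{k},x_{k}} = \bigvee_{k=1}^{n} c_{v_{k}} \circ a_{x_{k}}\,,
\end{align*}
because at a point $t$ with $h(t) = v_{j}$ the only generators that contribute are those with $k \leq j$, and their join is $v_{j}$ since the image is a chain (for $j = 0$ the empty join gives $\bot = v_{0}$). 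Thus $h$ is a finite join of maps $c_{v_{k}} \circ a_{x_{k}}$, which are exactly the generators of $\HomR{L,L}$, so $h \in \HomR{L,L}$.

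With the lemma in hand the proof closes quickly: each $\psi(y,x)$ is a \jc endomap, being an element of $\EL$, and it has finite chain image by hypothesis, so $\psi(y,x) \in \HomR{L,L}$; hence every $\tilde{\psi}(g) = \bigvee_{x \in L} \psi(g(x),x)$ lies in $\HomR{L,L}$, and in particular so does $id_{L}$. I expect the finite-chain lemma to be the only real obstacle: the identity $h(x_{k}) = v_{k-1}$ and the equivalence $t \leq x_{k} \Leftrightarrow h(t) \leq v_{k-1}$ must be checked carefully, as they are precisely the places where both join-preservation of $h$ and the total ordering (and finiteness) of the image enter. Everything else—closure of $\HomR{L,L}$ under joins and the final appeal to \RT—is already available from the earlier parts of the paper.
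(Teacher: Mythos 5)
Your argument is correct, but it takes a genuinely different route from the paper's. The paper verifies \cdity directly: writing $z = \bigwedge_{i}\bigvee_{j \in J_{i}} z_{j}$ and using $id_{L} = \bigvee\set{\psi(y,x) \mid \psi(y,x) \leq id_{L}}$, it exploits the finite-chain hypothesis only to extract a maximum from each finite set $\set{\psi(y,x)(z_{j}) \mid j \in J_{i}}$, thereby building a choice function witnessing $z \leq \bigvee_{s}\bigwedge_{i} z_{s(i)}$; this is a self-contained adaptation of (one half of) the proof of \RT, and it never mentions tight maps. You instead isolate the structural lemma that every \jc endomap $h$ whose image is a finite chain $\bot = v_{0} < \dots < v_{n}$ decomposes as $h = \bigvee_{k=1}^{n} c_{v_{k}} \circ a_{x_{k}}$ with $x_{k} = \bigvee\set{t \mid h(t) \leq v_{k-1}}$, hence is tight; your verifications are sound---$h(x_{k}) = v_{k-1}$ because $v_{k-1}$ lies in the image and $h$ preserves the defining join, and evaluating $\bigvee_{k} e_{v_{k},x_{k}}$ at $t$ with $h(t)=v_{j}$ returns $v_{j}$, the case $j=0$ being the empty join. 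Since $\HomR{L,L}$ is closed under arbitrary joins (it is the image of the left adjoint $\funJoinOf$), the entire image of $\tilde{\psi}$ lands in $\HomR{L,L}$, so $id_{L} = \jm{id_{L}}$ and \RT concludes. Your route buys the sharper structural conclusion $\mathrm{im}(\tilde{\psi}) \subseteq \HomR{L,L}$ and ties the proposition back to Proposition~\ref{prop:ConsNat}, at the cost of invoking \RT as a black box where the paper's argument is elementary and self-contained; note also that both proofs use the chain assumption essentially (a finite image that is not a chain would not suffice for your decomposition, as $id_{M_{3}}$ shows).
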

\begin{proof}
  Let $z \in Z$ such that
  $z = \bigwedge_{i \in I} \bigvee_{j \in J_{i}} z_{j}$.  We aim at
  showing that $z \leq \bigvee_{s} \bigwedge_{i \in I} z_{s(i)}$, with
  the index $s$ ranging on choice functions
  $s : I \rto \bigcup_{i \in I} J_{i}$ ($s$ is a choice function if
  $s(i) \in J_{i}$, for each $i \in I$).
  Since $id_{L} = \bigvee \set{\psi(y,x) \mid \psi(y,x) \leq id_{L}}$,
  we also have
  $z = \bigvee \set{\psi(y,x)(z) \mid \psi(y,x) \leq id_{L}}$ and
  therefore, in order to achieve our goal, it will be enough to show
  that for each $y,x \in L$, if $\psi(y,x) \leq id_{L}$, then
  $\psi(y,x)(z) \leq \bigvee_{s} \bigwedge_{i \in I} z_{s(i)}$.
  Let $y,x$ be such that $\psi(y,x) \leq id_{L}$, fix $i \in I$, and
  observe then that
  \begin{align*}
    \psi(y,x)(z) & \leq \psi(y,x)(\bigvee_{j \in J_{i}} z_{j}) =
    \bigvee_{j \in J_{i}} \psi(y,x)(z_{j})\,,
  \end{align*}
  since $z \leq \bigvee_{j \in J_{i}} z_{j}$.  Since the set
  $\set{ \psi(y,x)(z_{j}) \mid j \in J_{i}}$ is finite and directed
  (it is a finite chain), it has a maximum: there exists
  $j(i) \in J_{j}$ such that
  $\bigvee_{j \in J_{i}} \psi(y,x)(z_{j}) = \psi(y,x)(z_{j(i)})$.  It
  follows that
  \begin{align*}
    \psi(y,x)(z) & \leq \psi(y,x)(z_{j(i)}) \leq z_{j(i)}\,,
  \end{align*}
  since $\psi(y,x) \leq id_{L}$.  By letting $i$ vary, we have
  constructed a choice function $j : I \rto \bigcup_{i \in I} J_{i}$
  such that $\psi(y,x)(z) \leq \bigwedge_{i \in I} z_{j(i)}$, and consequently
  $\psi(y,x)(z) \leq \bigvee_{s} \bigwedge_{i \in I} z_{s(i)}$.
\end{proof}
Bimorphisms satisfying the conditions of
Proposition~\ref{prop:Rabstract} might be easily constructed by taking
$f \in \HomJ{ L, \EL}$,  $g \in \HomJ{L^{op},L^{op}}$ (resp.,
$f \in \EL$ and $g \in \HomJ{L^{op},\HomJ{L,L}}$), and defining then
\begin{align*}
  \psi(y,x) & \eqdef f(y) \circ a_{g(x)}
  \quad
  (\, \text{resp., } \psi(y,x) \eqdef c_{f(y)} \circ g(x) \,)\,.
\end{align*}
These \bimorphism{s}  satisfy
the conditions of Proposition~\ref{prop:Rabstract}, since they only
take two values.  As a consequence of the proposition, they cannot be
used to construct a fancy dual isomorphism of $\HLL$.

\ifEPTCS
\bibliographystyle{eptcs}
\else
\bibliographystyle{abbrv}
\fi
\bibliography{biblio}

\appendix
\end{document}